\newtheorem{theorem}{Theorem}[section]
\newtheorem{proposition}[theorem]{Proposition}
\newtheorem{corollary}[theorem]{Corollary}
\theoremstyle{definition}
\newtheorem{remark}[theorem]{Remark}
\newtheorem{assumption}{Assumption}
\numberwithin{equation}{section}
\newcommand{\form}{\mathcal{F}_{\alpha}}
\newcommand{\formb}{\mathcal{F}_{\alpha,\beta}}
\newcommand{\formf}{\mathcal{F}_{\alpha,\mathrm{F}}}
\newcommand{\formt}{\widetilde{\mathcal{F}}_{\alpha,\beta}}
\newcommand{\formv}{\mathcal{F}_{\alpha,V}}
\newcommand{\formvb}{\mathcal{F}_{\alpha,\beta,V}}
\newcommand{\formvf}{\mathcal{F}_{\alpha,\mathrm{F},V}}
\newcommand{\ham}{H_{\alpha}}
\newcommand{\Ham}{\mathcal{H}_{\alpha}}
\newcommand{\hamv}{H_{\alpha,V}}
\newcommand{\hamb}{H_{\alpha,\beta}}
\newcommand{\hamvb}{H_{\alpha,\beta,V}}
\newcommand{\hamf}{H_{\alpha,\mathrm{F}}}
\newcommand{\hamvf}{H_{\alpha,\mathrm{F}, V}}
\newcommand{\hamg}{H_{\alpha, \gamma}}
\newcommand{\phila}{\phi_{\la}}
\newcommand{\gila}{G_{\la}}
\newcommand{\psiap}{\psi_{\alpha,+}}
\newcommand{\psiam}{\psi_{\alpha,-}}
\newcommand{\cal}{c_{\alpha}}
\newcommand{\Le}{L^2_{\mathrm{even}}}
\newcommand{\nablaa}{\mathbf{D}_{\alpha}}
\newcommand{\xcm}{X_{\mathrm{cm}}}
\newcommand{\beq}{\begin{equation}}
\newcommand{\eeq}{\end{equation}}
\newcommand{\be}{\begin{equation*}}
\newcommand{\ee}{\end{equation*}}
\newcommand{\OO}{\mathcal{O}}
\newcommand{\FF}{\mathcal{F}}
\newcommand{\Z}{\mathbb{Z}}
\newcommand{\lf}{\left}
\newcommand{\ri}{\right}
\newcommand{\la}{\lambda}
\renewcommand{\Re}{\operatorname{Re}\,}
\renewcommand{\leq}{\leqslant}
\renewcommand{\geq}{\geqslant}
\newcommand{\HH}{\mathscr H}
\newcommand{\C}{\mathbb{C}}
\newcommand{\xv}{\mathbf{x}}
\newcommand{\rv}{\mathbf{r}}
\newcommand{\bdm}{\begin{displaymath}}
\newcommand{\edm}{\end{displaymath}}
\newcommand{\bdn}{\begin{eqnarray}}
\newcommand{\edn}{\end{eqnarray}}
\newcommand{\bay}{\begin{array}{c}}
\newcommand{\eay}{\end{array}}
\newcommand{\ben}{\begin{enumerate}}
\newcommand{\een}{\end{enumerate}}
\newcommand{\beqn}{\begin{eqnarray}}
\newcommand{\eeqn}{\end{eqnarray}}
\newcommand{\bml}[1]{\begin{multline} #1 \end{multline}}
\newcommand{\bmln}[1]{\begin{multline*} #1 \end{multline*}}
\newcommand{\R}{\mathbb{R}}
\newcommand{\N}{\mathbb{N}}
\newcommand{\diff}{\mathrm{d}}
\newcommand{\dom}{\mathscr{D}}
\newcommand{\eps}{\varepsilon}
\newcommand{\disp}{\displaystyle}
\newcommand{\tx}{\textstyle}
\newcommand{\braket}[2]{\lf\langle #1|#2 \ri\rangle}
\newcommand{\braketr}[2]{\lf\langle #1\lf|#2\ri. \ri\rangle}
\newcommand{\meanlrlr}[3]{\lf\langle #1\lf|#2\ri|#3\ri\rangle}
\begin{document}

\title{Hamiltonians for Two-Anyon Systems}

\author{M. Correggi}
\address{Dipartimento di Matematica, ``Sapienza'' Universit\`a di Roma, P.le A. Moro 5, 00185 Roma, Italy}
\email{michele.correggi@gmail.com}
\urladdr{http://www1.mat.uniroma1.it/people/correggi/}

\author{L. Oddis}
\address{Dipartimento di Matematica, ``Sapienza'' Universit\`a di Roma, P.le A. Moro 5, 00185 Roma, Italy}
\email{oddis@mat.uniroma1.it}

\date{\today}

%\pacs{03.75.Ss, 05.30.Fk, 67.85.-d }

\keywords{Anyons, fractional statistics, Aharonov-Bohm potentials.}
\subjclass[2010]{47A07, 81Q10, 81Q80}

\begin{abstract}
We study the well-posedness of the Hamiltonian of a system of two anyons in the magnetic gauge. We identify all the possible quadratic forms realizing such an operator for non-interacting anyons and prove their closedness and boundedness from below. We then show that the corresponding self-adjoint operators give rise to a one-parameter family of extensions of the naive two-anyon Schr\"{o}dinger operator. We finally extend the results in presence of a two-body radial interaction. 
\end{abstract}	

\dedicatory{Dedicated to Gianfausto Dell'Antonio on the occasion of his 85th birthday}

\maketitle

\section{Introduction}
\label{sec:intro}

The possible existence in quantum mechanics of two-dimensional identical particles obeying to {\it fractional} or {\it intermediate statistics} (later named {\it anyons} \cite{Wi}) was known since the pioneering work \cite{LM} of J.M. Leinaas and J. Myrheim. Whether such particles could exist in nature or play a role in physical models remained however an open question, until they were suggested as quasi-particle carriers in a model for the fractional quantum Hall effect \cite{ASW}. Since then several physical theories involving anyons have been proposed to describe phenomena of condensed matter physics (see, e.g., \cite{LunRou2} for an example). 

Meanwhile, the mathematical implications of fractional statistics have been studied as well (see, e.g., \cite{BCMS, CLR, GHKL, LM, LarLun, Lu, LunRou1, LS, LSo, MT}), but despite the numerous results on the topic, the question of rigorous definition of the Hamiltonian for a many-anyon system has not been studied in detail. This is precisely the problem we deal with in this note for the simplest system of two anyons.

Let us now discuss in more details the question we plan to study: in the {\it magnetic gauge} the state of a two-anyon system is described by a wave function $ \Psi(\xv_1, \xv_2) \in L^2_{\mathrm{sym}}(\R^4) $, i.e., a square integrable function which is symmetric under exchange $ \xv_1 \to \xv_2 $. The Hamiltonian of the system $ \Ham $ acts on $ L^2_{\mathrm{sym}}(\R^4) $ as
\beq
	\label{eq: Ham}
	\Ham = \lf( - i \nabla_1 + \tx\frac{\alpha (\xv_1 - \xv_2)^{\perp}}{|\xv_1 - \xv_2|^2} \ri)^2 + \lf( - i \nabla_2 + \tx\frac{\alpha (\xv_2 - \xv_1)^{\perp}}{|\xv_1 - \xv_2|^2} \ri)^2 + V\lf(\lf|\xv_1 - \xv_2\ri|\ri),
\eeq
where $ \alpha \in [0,1] $ is the {\it statistic parameter} (with $ \alpha = 0,1 $ identifying {\it bosons} and {\it fermions}, respectively), $ V $ is the interaction potential, $ \xv^{\perp} : = (-y,x) $ and we have set $ m = 1/2 $ and $ \hbar = 1 $. Hence, each particle generates a {\it Aharonov-Bohm} (AB) {\it magnetic potential}, affecting the other one. The intensity of such a field is proportional to the statistic parameter $ \alpha $.

 After the extraction of the center of mass, i.e., setting $ \xcm : = \frac{1}{2}(\xv_1 + \xv_2) $ and $ \rv : = \xv_1 - \xv_2 $, the space of states becomes $ L^2(\R^2) \otimes \Le(\R^2) $, where the latter one denotes the Hilbert space of square-integrable {\it even} functions. Note that the bosonic symmetry constraint translates into the parity request. The operator $ \Ham $ becomes then $ \Ham = 2 ( - \frac{1}{4} \Delta_{\xcm} + \hamv(\rv)) $, where
\beq
	\label{eq: ham}
	\hamv : = \lf( - i \nabla_\rv + \frac{\alpha \rv^{\perp}}{r^2} \ri)^2 + V(r),
\eeq
acts only on $ \Le(\R^2) $.

In the rest of the paper, we study the operator $ \hamv $ and, in particular, focus on its self-adjoint extensions. It is indeed easy to see that, at least if $ V = 0 $ (we set $ \ham : = H_{\alpha,0} $), $ \ham $ is symmetric and positive. Therefore, it certainly admits a self-adjoint extension, i.e., the {\it Friedrichs extension}, which is typically the one selected in almost all mathematical investigations of anyons. In fact, it is a bit harder to realize that $ \ham $ is actually not essentially self-adjoint, e.g., on $ C^{\infty}_0(\R^2 \setminus \{0 \}) $, which is a natural dense domain. The reason for this lack of self-adjointness is obviously the singularity at $ r = 0 $ of the AB potential. In order to properly set up a two-anyon model (e.g., select the dynamics), one has thus to know more about the possible self-adjoint realizations of $ \ham $.

This question has already been investigated in \cite{AT} in the framework of Von Neumann operator theory, although the physical model considered there is slightly different (see below), and in a more heuristic way in \cite{BS}. Here, we take a different point of view and introduce a one-parameter family of {\it quadratic forms}, which are meant to describe the possible realizations of the energy of the system. Next, we prove in Thm. \ref{teo: closed} that such forms are closed and bounded from below. Finally, in Cor. \ref{cor: hamb} we show that the corresponding operators are self-adjoint extensions of $ \ham $ and, in fact, exhaust all such extensions. The results are then extended to the interacting case, under suitable assumptions on  $ V $.

%\medskip

We briefly recall here some notation that will be used in the rest of the paper. Given two functions $ f(x), g(x) $, with $ g > 0 $, we use the following convention for Landau symbols: 
%\begin{itemize}
	%\item 
	$ f = \OO(g) $, if $ \lim_{x \to 0^+} |f|/g \leq C $;
	%\item 
	$ f = o(g) $, if $  \lim_{x \to 0^+} |f|/g = 0 $; 
	%\item 
	$ f \sim g $, if $ f = \OO(g) $ and $ \lim_{x \to 0^+} |f|/g > 0 $.
%\end{itemize}
Here and below $ C $ stands for a finite positive constant, whose value may change from line to line.

\section{Main Results}
\label{sec: main}

%In this Sect. we formulate our main results. We first discuss the simple but key case of non-interacting anyons (Sect. \ref{sec: free}) and then generalize the results to interacting particles (Sect. \ref{sec: interacting}).

\subsection{Free anyons}
\label{sec: free}

Before providing the definition of the quadratic forms describing the possible realizations of the center of mass energy of a pair of anyons, we first have to introduce a quadratic form $ \formf $ which is associated to a very special extension of $ H_{\alpha} $, i.e., the Friedrichs extension $ \hamf $: $ \ham $ is a positive symmetric operator and, as such, it admits at least one self-adjoint extension which is still positive and whose domain is contained in the domain of the corresponding quadratic form, simply defined as the expectation value of $ \ham $, i.e.,
\beq
	\label{eq: formf}
	\formf[\psi] : = \form[\psi] = \int_{\R^2} \diff \rv \: \bigg| \bigg( -i \nabla + \frac{\alpha \rv^{\perp}}{r^2} \bigg) \psi \bigg|^2,
\eeq
with domain
\beq
	\label{eq: dom formf}
	\dom[\formf] = \overline{C^{\infty}_0(\R^2 \setminus \{ 0 \})}^{\lf\| \; \ri\|_{\alpha}} \cap \Le. %\lf\{ \psi \in L^2(\R^2) \: \big| \: \form[\psi] < + \infty \ri\}.
\eeq
Here, $ \lf\| \phi \ri\|_{\alpha}^2 : = \form[\phi] $ and 
\beq
	\label{eq: hilb}
	\Le : = \lf\{ \psi \in L^2(\R^2) \: \big| \: \psi(-\rv) = \psi(\rv) \ri\} = \bigoplus_{k \in \Z} \HH_{2k},
\eeq
where we have denoted for short
\beq
	\HH_{n} : = L^2(\R^+, r \diff r) \otimes \mathrm{span} \lf( e^{i n \vartheta} \ri),
\eeq
and used polar coordinates $ \rv = (r, \vartheta) \in \R^+ \times [0,2\pi) $. Throughout the paper, we will refer to the decomposition in \eqref{eq: hilb} by setting
\beq
	\label{eq: fourier}
	\psi(\rv) = \frac{1}{\sqrt{2\pi}} \sum_{k \in \Z} \psi_{2k}(r) e^{2 i k \vartheta},
\eeq
and use the notation $ \psi_n $ to denote the Fourier coefficients of $ \psi $. With respect to such a decomposition, the quadratic form \eqref{eq: formf} can be rewritten as
\beq
	\label{eq: formf alternative}
	\formf[\psi] = \form[\psi] = \sum_{k \in \Z} \int_{0}^{+\infty} \diff r  \: r \, \bigg\{ \lf| \psi_{2k}^{\prime} \ri|^2 + \frac{\lf( 2 k  + \alpha \ri)^2}{r^2} \lf| \psi_{2k} \ri|^2 \bigg\}.
\eeq

As a preliminary result, we prove that $ \formf $ is closed on $ \dom[\formf] $ and characterize its domain. Notice that at this stage the name Friedrichs extension is not justified and it will make sense only once the whole family of forms is defined.

	\begin{proposition}[Friedrichs extension]
		\label{pro: formf}
		\mbox{}	\\
		The quadratic form $ \formf $ is closed and positive on $ \dom[\formf] $ for any $ \alpha \in [0,1] $. Furthermore, for any $ \alpha \in (0,1) $,
		\beq
			\label{eq: dom formf zero}
			\dom[\formf] \subset %H_0^1(\R^2 \setminus \{ 0 \})=
			H^1(\R^2).
		\eeq
		The associated self-adjoint operator $ \hamf $ acts as $ \ham $ on the domain
		\beqn
			\label{eq: dom hamf}
			\dom\lf( \hamf \ri) &=& \lf\{ \psi \in \dom\lf[ \formf \ri] \: \big| \: \ham \psi \in L^2 \ri\} = \lf\{ \psi \: \Big| \: \lf. \psi \ri|_{\HH_n} \in H^2(\R^2), \forall n \neq 0; \ri. \nonumber	\\
			&&  \lf. \psi_0 \in H^2(\R^2\setminus\{0\}) \cap H^1(\R^2), \, \psi_0(r)  \underset{r \to 0^+}{\sim} r^{\alpha} + o(r) \ri\}.
		\eeqn
	\end{proposition}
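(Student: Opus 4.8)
The plan is to exploit the orthogonal decomposition $\Le=\bigoplus_{k\in\Z}\HH_{2k}$ of \eqref{eq: hilb}, which diagonalises both the form \eqref{eq: formf alternative} and the differential expression $\ham$: on the $n$-th channel, $n=2k$, the latter is the one-dimensional Sturm--Liouville operator
\beq
	L_n:=-\partial_r^2-\tfrac1r\partial_r+\tfrac{(n+\alpha)^2}{r^2}\quad\text{on }L^2(\R^+,r\,\diff r),
\eeq
with reduced form $f\mapsto\int_0^\infty r\{|f'|^2+\tfrac{(n+\alpha)^2}{r^2}|f|^2\}\diff r$. All three assertions then reduce to one-dimensional problems, treated channel by channel.

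\emph{Closedness and positivity.} Positivity is immediate from \eqref{eq: formf alternative}. The analytic input is a Hardy-type inequality: since $\alpha\in[0,1]$ forces $(2k+\alpha)^2\geq\alpha^2$ and $(2k+\alpha)^2\geq\tfrac14(2k)^2$ for all $k\in\Z$, comparing term by term with the polar expansion of $\|\nabla\psi\|^2_{L^2}$ produces a constant $c_\alpha>0$ with
\beq
	\formf[\psi]\geq c_\alpha\Big(\|\nabla\psi\|^2_{L^2(\R^2)}+\int_{\R^2}\frac{|\psi(\rv)|^2}{|\rv|^2}\,\diff\rv\Big)
\eeq
for $\psi\in C^\infty_0(\R^2\setminus\{0\})\cap\Le$, and hence, by density, on all of $\dom[\formf]$. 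Given $(\psi_n)\subset\dom[\formf]$ Cauchy in the form norm $(\formf[\,\cdot\,]+\|\cdot\|^2_{L^2})^{1/2}$, this inequality makes $(\nabla\psi_n)$ and $(|\rv|^{-1}\psi_n)$ Cauchy in $L^2$ while $\psi_n\to\psi$ in $L^2$; identifying the distributional limits yields $\psi\in H^1(\R^2)$, $(-i\nabla+\alpha\rv^\perp r^{-2})\psi_n\to(-i\nabla+\alpha\rv^\perp r^{-2})\psi$ in $L^2$, and $\psi\in\overline{C^\infty_0(\R^2\setminus\{0\})}^{\|\cdot\|_\alpha}\cap\Le=\dom[\formf]$ with $\formf[\psi_n-\psi]\to0$. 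This gives closedness and simultaneously \eqref{eq: dom formf zero} for $\alpha\in(0,1)$ (the cases $\alpha\in\{0,1\}$ being analogous, using for $\alpha=0$ the $C^\infty_0(\R^2\setminus\{0\})$-density of the even sector of $H^1(\R^2)$).

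\emph{The self-adjoint operator.} By general form theory $\hamf$ is the positive operator associated with $\formf$, hence the orthogonal sum over $n=2k$ of the self-adjoint realisations of $L_n$ determined by the reduced forms, and it acts as the differential expression $\ham$. If $n\neq0$ then $|n+\alpha|\geq2-\alpha\geq1$, so $L_n$ is in the limit-point case at $0$ (and at $\infty$) and is essentially self-adjoint on $C^\infty_0(\R^+)$: no boundary condition survives, its domain is $\{f:\ f,L_nf\in L^2(r\,\diff r)\}$, and since $|n+\alpha|>1$ strictly the sole admissible behaviour at the origin is $f\sim r^{|n+\alpha|}$, which together with $f\in H^1$ and interior elliptic regularity identifies this domain with $\{\psi|_{\HH_n}\in H^2(\R^2)\}$. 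If $n=0$ and $\alpha\in(0,1)$ then $|n+\alpha|=\alpha\in(0,1)$, the limit-circle case at $0$: the solutions $r^{\pm\alpha}$ of $L_0f=0$ are both square-integrable against $r\,\diff r$ near the origin, so $L_0$ possesses a one-parameter family of self-adjoint extensions. The one we obtain is the Friedrichs extension, singled out by the fact that $r^\alpha\chi\in\dom[\formf]$ while $r^{-\alpha}\chi\notin\dom[\formf]$ for a cutoff $\chi\equiv1$ near $0$ --- indeed $\formf[r^\alpha\chi]<\infty$ whereas $\formf[r^{-\alpha}\chi]=\infty$ --- so that the $r^{-\alpha}$ branch is forbidden in the $k=0$ channel, whence $\psi_0\in H^2(\R^2\setminus\{0\})\cap H^1(\R^2)$.

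\emph{Fine asymptotics and main obstacle.} To obtain the sharp behaviour $\psi_0(r)=c\,r^\alpha+o(r)$ as $r\to0^+$, I would write $\psi_0=c\,r^\alpha+\psi_0^{\mathrm{p}}$ with $\psi_0^{\mathrm{p}}$ a solution of $L_0\psi_0=g$, $g:=\ham\psi_0\in L^2(r\,\diff r)$, chosen via variation of parameters with the pair $r^{\pm\alpha}$ (whose modified Wronskian is constant) so as to have no $r^{-\alpha}$ leading term; a Cauchy--Schwarz bound on $r^\alpha\int_0^r s^{1-\alpha}g(s)\,\diff s$ then gives $|\psi_0^{\mathrm{p}}(r)|\leq C\,r\,\|g\|_{L^2((0,r),\,s\,\diff s)}=o(r)$. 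I expect the delicate part to be exactly this $k=0$ analysis: one must (i) show rigorously that the boundary condition carried by $\dom[\formf]$ is precisely the Friedrichs one --- i.e.\ control the boundary sesquilinear form of $L_0$ on $\dom[\formf]$ as $r\to0^+$, which also underlies the identity $\dom(\hamf)=\{\psi\in\dom[\formf]:\ \ham\psi\in L^2\}$ --- and (ii) upgrade the remainder from the easy $o(r^\alpha)$ to the claimed $o(r)$, which forces a careful use of the integrability of $g$ against $s^{1\pm\alpha}$ near the origin. The Hardy inequality and the limit-point analysis for $n\neq0$ are, by contrast, routine.
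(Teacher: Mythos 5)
Your argument is correct, and for the hardest part of the statement --- the characterization of $\dom(\hamf)$ --- it follows a genuinely different route from the paper's. The paper derives the operator domain abstractly from the form--operator correspondence: it writes the sesquilinear form, integrates by parts against test functions vanishing near the origin, and then pins down the admissible behaviour of $\psi_0$ by a power-law ansatz $\psi_0\sim r^{\nu}$, observing that square-integrability of $(-\nu^2+\alpha^2)r^{\nu-2}$ forces $\nu=\alpha$ (the branch $\nu=-\alpha$ being excluded because functions in $\dom[\formf]$ vanish at the origin). You instead invoke Weyl limit-point/limit-circle theory channel by channel: limit point at $0$ for $n\neq 0$ since $|n+\alpha|\geq 1$, limit circle for $n=0$, with the Friedrichs boundary condition selected by the fact that $r^{-\alpha}\chi\notin\dom[\formf]$, and the sharp remainder $o(r)$ obtained by variation of parameters and Cauchy--Schwarz. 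Your route is more systematic and in particular makes rigorous the step the paper treats heuristically (the paper \emph{assumes} a pure power behaviour, whereas your particular-solution estimate $|\psi_0^{\mathrm{p}}(r)|\leq C r\,\|g\|_{L^2((0,r),s\,\diff s)}=o(r)$ actually proves the claimed asymptotics for a general element of the maximal domain); the price is the extra Sturm--Liouville machinery. For closedness and \eqref{eq: dom formf zero} the two proofs essentially coincide: the paper declares closedness a consequence of defining $\dom[\formf]$ as a closure and quotes $\form[\phi]\geq C\|\phi\|^2_{H^1}$, which is exactly your channelwise Hardy inequality (your version, with the additional $|\rv|^{-2}$ weight, is the cleaner statement and also justifies closability, which the paper leaves implicit). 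One small point of care: for $n\neq 0$ with $|n+\alpha|>1$ the particular solution may dominate $r^{|n+\alpha|}$ near the origin, so the phrase ``the sole admissible behaviour is $f\sim r^{|n+\alpha|}$'' is loose; but this does not affect the conclusion $\psi|_{\HH_n}\in H^2(\R^2)$, which follows from the limit-point property together with the $L^2$-bound on $r^{-2}\psi|_{\HH_n}$, and the paper is no more precise at this step.
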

	
	\begin{remark}[Asymptotics for $ r \to 0 $]
		\label{rem: dom hamf asympt r}
		\mbox{}	\\
		Notice that the asymptotics in \eqref{eq: dom hamf} applies only to the $ s-$wave component of $ \psi $, i.e., $ \psi_0 $. In fact, the other Fourier components $ \psi_n $, $ n \neq 0 $, are such that 
		\beq
			\psi_n(r) = o(r),		\qquad \mbox{as } r \to 0^+,
		\eeq
		i.e., functions in $ \dom\lf( \hamf \ri) $ with non-zero angular momentum must vanish at $ 0 $ faster than $ r $. 
	\end{remark}

	As we are going to see the form $ \formf $ will be contained in the family of forms we are about to define and, among such forms, $ \formf $ is the largest, namely the one with the largest lower bound, but also the one with smallest domain. 
	
	The family of quadratic forms $ \formb[\psi] $, $ \alpha \in (0,1) $ and $ \beta \in \R $, is defined as
\beq
	\label{eq: formb}
		\formb[\psi] : = \form[\phila] - 2 \la^2 \Re q \braket{\phila}{\gila} + \lf[ \beta + \lf( 1 - \alpha \ri) \cal \la^{2\alpha} \ri] \lf| q \ri|^2
\eeq
where $ \psi $ belongs to the domain
\beq
	\label{eq: dom formb}
	\dom[\formb] = \lf\{ \psi \in \Le \: \big| \: \psi = \phila + q \gila, \phila \in \dom[\formf], q \in \C \ri\},
\eeq
and $ G_{\la} $ , $ \la \in \R^+ $, is the defect function
\beq
	\label{eq: gila}
	\gila(\rv) : = \la^{\alpha} K_{\alpha}(\la r),
\eeq
with $ K_{\alpha} $ the modified Bessel function of index $ \alpha $ (see, e.g., \cite[Sect. 9.6]{AS}). The coefficient $ \cal $ is given by (see \cite[Eq. 6.521.3]{GR})
\beq
	\label{eq: cal}
	\cal : = \frac{\lambda^{2-2\alpha} \lf\| \gila \ri\|_2^2}{\alpha} = \frac{2\pi}{\alpha} \int_0^{+\infty} \diff r \: r \: \lf| K_{\alpha}(r) \ri|^2 = \frac{\pi^2}{\sin \pi \alpha} > 0
\eeq
and chosen in such a way that the form is independent of $ \la $ (see below). The key property of $ \gila $ is that
\beq
	\label{eq: defect}
	\lf( - \nablaa^2 + \la^2 \ri) \gila = 0,	
\eeq
where we have used the short-hand notation
\beq
	\label{eq: nablaa}
	\nablaa : = - i \nabla + \frac{\alpha \rv^{\perp}}{r^2}.
\eeq
The identity \eqref{eq: defect} is indeed a consequence of the Bessel equation \cite[Eq. 9.6.1]{AS}
\bdm
	z^2 \partial_z^2 K_{\alpha}(z) + z K_{\alpha}(z) - (z^2 + \alpha^2) K_{\alpha}(z) = 0.
\edm		
Notice that (see \cite[Eqs. 9.6.9]{AS} or, more precisely, \cite[Eqs. 8.440 \& 8.443]{GR} and the definition of $ K_{\alpha} $)
\beq
	\label{eq: gila asympt}
	\gila(\rv) = 2^{\alpha - 1} \Gamma(\alpha) r^{-\alpha} - \frac{\Gamma(1-\alpha) \lambda^{2\alpha}}{ 2^{1+\alpha} \alpha} r^{\alpha} + \OO(r^{2-\alpha}), 		\qquad		\mbox{as } r \to 0^+,
\eeq
(in particular $ G_{\la}\not\in H^1(\R^2)$) while, for large $ r $ and $ \la $ positive, $ \gila(\rv) \sim r^{-1/2} e^{-r} $ \cite[Eq. 9.7.2]{AS}, so that
\beq
	\label{eq: gila l2}
	\gila \in L^2(\R^2),		\qquad		\mbox{if and only if }	\alpha \in [0,1).
\eeq

For $ \alpha \in (0,1) $, the quadratic form \eqref{eq: formb} can in fact be rewritten as
\beq
	\label{eq: formb alternative}
	\framebox{$ \formb[\psi] = \form[\phila] +\la^2 \lf\| \phila \ri\|^2_2 - \la^2 \lf\| \psi \ri\|_2^2 + \lf( \beta + \cal \la^{2\alpha} \ri) \lf| q \ri|^2, $}
\eeq
As anticipated, the Friedrichs form $ \formf $ is included in the family and formally recovered for $ \beta = + \infty $, in which case $ q = 0 $ and $ \dom[\FF_{\alpha, + \infty}] = \dom[\formf] $.

Before discussing the properties of the quadratic form $ \formb $, we have in fact to show that the definition \eqref{eq: formb} is well-posed. First of all the decomposition $ \phila + q \gila $ is unique, since $ \phila \in \dom[\formf] $, while $ \gila \in L^2(\R^2) \setminus \dom[\formf] $ (see \eqref{eq: gila asympt} above). Furthermore, in spite of the presence of the parameter $ \la \in \R^+ $, the form \eqref{eq: formb} along with its domain \eqref{eq: dom formb} is in fact independent of $ \la $: let $ \la_1 \neq \la_2 \in \R^+ $, then
\beq
	G_{\la_1}(\rv) - G_{\la_2}(\rv) \in \dom\lf[ \formf \ri],
\eeq
which is again a consequence of \eqref{eq: gila asympt}, since $ G_{\la_1}(\rv) - G_{\la_2}(\rv) = \OO(r^{\alpha}) $, as $ r \to 0^+ $. 
Now, let $\psi=\phi_{\la_1} + q G_{\la_1}=\phi_{\la_2}+qG_{\la_2}$, then we get
\bmln{
	%\label{eq: independent}
	\formb[\psi]=\form \lf[\phi_{\la_2}+q\lf(G_{\la_2}-G_{\la_1}\ri) \ri] + \la_1^2 \lf\| \phi_{\la_1} \ri\|^2_2 - \la_1^2 \lf\| \psi \ri\|^2_2 + \lf( \beta + \cal \la_1^{2\alpha} \ri) \lf| q \ri|^2 \\
	%= \form\lf[ \phi_{\la_2} \ri] - 2 \Re \lf( \form\lf[\phi_{\la_2}\:,\:q\lf(G_{\la_2}-G_{\la_1}\ri) \ri] \ri) - \form \lf[ q\lf(G_{\la_2}-G_{\la_1}\ri) \ri] \\ 
	%+ \la_1^2 \lf\| \phi_{\la_1} \ri\|^2_2 - \la_1^2 \lf\| \psi \ri\|^2_2 + \lf( \beta + \cal \la_1^{2\alpha} \ri) \lf| q \ri|^2 \\
	= \form\lf[ \phi_{\la_2} \ri] - 2 \Re q \braketr{\phi_{\la_2}}{\la_2^2 G_{\la_2}- \la_1^2 G_{\la_1}} - \lf| q \ri|^2 \braket{G_{\la_2}-G_{\la_1}}{\la_2^2G_{\la_2}- \la_1^2G_{\la_1}} \\ 
	+ \la_1^2 \lf\| \phi_{\la_2} +q\lf(G_{\la_2}-G_{\la_1}\ri) \ri\|^2_2 - \la_1^2 \lf\| \psi \ri\|^2_2 + \lf( \beta + \cal \la_1^{2\alpha} \ri) \lf| q \ri|^2, }
	where we used \eqref{eq: defect}. Writing now 
	\bdm
		2 \Re q \braketr{\phi_{\la_2}}{ G_{\la_2}} = \lf\| \psi \ri\|^2 - \lf\| \phi_{\la_2} \ri\|^2_2 - \lf| q \ri|^2 \lf\|  G_{\la_2} \ri\|_2^2
	\edm
	and exploiting the fact that $ \gila $ is real, we obtain
\bmln{
	%\label{eq: independent}
	%= \form\lf[ \phi_{\la_2} \ri] + \la_2^2 \lf\| \phi_{\la_2} \ri\|^2_2  - \la_2^2 \lf\| \psi \ri\|^2_2 + \la_1^2 \lf[ 2 \Re q \braketr{\phi_{\la_2}}{G_{\la_1}} +  \lf\| \phi_{\la_1} \ri\|^2_2 - \lf\| \psi \ri\|_2^2 \ri] \\ 
	 %+ \lf[ \beta + \cal \la_1^{2\alpha} + \lf( \la_1^2 + \la_2^2 \ri) \braket{G_{\la_1}}{G_{\la_2}} - \la_1^2 \lf\| G_{\la_1} \ri\|_2^2 \ri] \lf| q \ri|^2 \\
	 %= \form\lf[ \phi_{\la_2} \ri] + \la_2^2 \lf\| \phi_{\la_2} \ri\|^2_2  - \la_2^2 \lf\| \psi \ri\|^2_2 + \la_1^2 \lf[ 2 \Re q \braketr{\phi_{\la_2}-\phi_{\la_1}}{G_{\la_1}} \ri] \\ 
	 %+ \lf[ \beta + \cal \la_1^{2\alpha} + \lf( \la_1^2 + \la_2^2 \ri) \braket{G_{\la_1}}{G_{\la_2}} - 2 \la_1^2 \lf\| G_{\la_1} \ri\|_2^2 \ri] \lf| q \ri|^2 \\
	 \formb[\psi] = \form\lf[ \phi_{\la_2} \ri] + \la_2^2 \lf\| \phi_{\la_2} \ri\|^2_2  - \la_2^2 \lf\| \psi \ri\|^2_2 + 2 |q|^2 \braketr{\la_1^2 G_{\la_1} - \la_2^2 G_{\la_2}}{G_{\la_1}}  \\ 
	 + \lf[ \beta + \cal \la_1^{2\alpha} + \lf( \la_1^2 + \la_2^2 \ri) \braket{G_{\la_1}}{G_{\la_2}} - 2 \la_1^2 \lf\| G_{\la_1} \ri\|_2^2 \ri] \lf| q \ri|^2 \\
	  = \form\lf[ \phi_{\la_2} \ri] + \la_2^2 \lf\| \phi_{\la_2} \ri\|^2_2  - \la_2^2 \lf\| \psi \ri\|^2_2  + \lf[ \beta + \cal \la_1^{2\alpha} + \lf( \la_1^2 - \la_2^2 \ri) \braket{G_{\la_1}}{G_{\la_2}} \ri] \lf| q \ri|^2,
}
which reproduces the expression of the quadratic form w.r.t. the $ \la_2$-decomposition, via the identity \cite[Eq. 6.521.3]{GR}
\bdm
	 \lf( \la_1^2 - \la_2^2 \ri)\braket{G_{\la_1}}{G_{\la_2}} = \cal \lf( \la_2^{2\alpha} - \la_1^{2\alpha} \ri).
\edm

\begin{remark}[Fermions]
	\label{rem: fermions}
	\mbox{}	\\
	Throughout the paper, we will always assume that $ \alpha \in [0,1) $, i.e., we exclude the case of fermions $ \alpha = 1 $, since in this case $ H_{\alpha} $ is essentially self-adjoint on $ C^{\infty}_0(\R^2 \setminus \{0 \}) \cap \Le $ and therefore no other extension is admissible. From the point of view of quadratic forms, this is made apparent from the fact that \eqref{eq: formb} and \eqref{eq: dom formb} are ill-defined for $ \alpha =  1$, since  $ \gila \notin  L^2(\R^2) $.
\end{remark}

%Our main result is the following

	\begin{theorem}[Closedness and boudedness from below of $ \formb $]
		\label{teo: closed}
		\mbox{}	\\
		For any $ \alpha \in (0,1) $ and any $ \beta \in \R $, the quadratic forms $ \formb $ is closed and bounded from below on the domain $ \dom[\formb] $. Furthermore,
		\beq
			\label{eq: lb formb}
			\frac{\formb[\psi]}{\lf\| \psi \ri\|_2^2} \geq
			\begin{cases}
				0,		&		\mbox{if } \beta \geq 0;	\\
				- \lf( \disp\frac{|\beta| \sin(\pi\alpha)}{\pi^2} \ri)^{\frac{1}{\alpha}},		&		\mbox{if }	 \beta < 0.
			\end{cases}
		\eeq
	\end{theorem}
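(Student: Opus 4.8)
The plan is to carry out both assertions through the equivalent expression \eqref{eq: formb alternative}, exploiting repeatedly the fact — established just above the statement — that $\formb$ and $\dom[\formb]$ are independent of $\la\in\R^+$, so that \eqref{eq: formb alternative} may be used with whichever $\la$ is convenient. Throughout, $\psi=\phila+q\gila$ denotes the unique decomposition with $\phila\in\dom[\formf]$ and $q\in\C$, the uniqueness being guaranteed by $\gila\in L^2(\R^2)\setminus\dom[\formf]$ (cf. \eqref{eq: gila l2} and \eqref{eq: gila asympt}).

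\emph{Lower bound.} In \eqref{eq: formb alternative} both $\form[\phila]\geq0$ and $\la^2\|\phila\|_2^2\geq0$, whence
\[
	\formb[\psi]\ \geq\ -\la^2\|\psi\|_2^2+\bigl(\beta+\cal\la^{2\alpha}\bigr)|q|^2 ,\qquad\forall\,\la>0 .
\]
If $\beta\geq0$ the last term is nonnegative for every $\la$, so $\formb[\psi]\geq-\la^2\|\psi\|_2^2$, and letting $\la\to0^+$ yields $\formb[\psi]\geq0$. If $\beta<0$, I instead choose $\la$ such that $\la^{2\alpha}=|\beta|/\cal=|\beta|\sin(\pi\alpha)/\pi^2$ (recall $\cal=\pi^2/\sin\pi\alpha$ from \eqref{eq: cal}), which kills the $|q|^2$-coefficient and leaves $\formb[\psi]\geq-\la^2\|\psi\|_2^2$ with $\la^2=\bigl(|\beta|\sin(\pi\alpha)/\pi^2\bigr)^{1/\alpha}$; this is exactly \eqref{eq: lb formb}.

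\emph{Closedness.} Now I fix $\la$ so large that $\la^2\geq1$ and $\beta+\cal\la^{2\alpha}\geq1$ (possible since $\cal>0$). For such $\la$, \eqref{eq: formb alternative} gives
\[
	\formb[\psi]+\la^2\|\psi\|_2^2=\form[\phila]+\la^2\|\phila\|_2^2+\bigl(\beta+\cal\la^{2\alpha}\bigr)|q|^2 ,
\]
and the right-hand side lies between $\form[\phila]+\|\phila\|_2^2+|q|^2$ and $\max\{\la^2,\beta+\cal\la^{2\alpha}\}\bigl(\form[\phila]+\|\phila\|_2^2+|q|^2\bigr)$. Hence, on $\dom[\formb]$, the $\formb$-form norm is equivalent to $\psi\mapsto\bigl(\|\phila\|_{\formf}^2+|q|^2\bigr)^{1/2}$, where $\|\cdot\|_{\formf}^2:=\form[\cdot]+\|\cdot\|_2^2$ is the form norm of $\formf$, under which $\dom[\formf]$ is complete by Proposition \ref{pro: formf}. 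Given a sequence $(\psi_n)$ Cauchy in the $\formb$-form norm, write $\psi_n=\phi_n+q_n\gila$; the equivalence forces $(\phi_n)$ to be Cauchy for $\|\cdot\|_{\formf}$ and $(q_n)$ Cauchy in $\C$, hence $\phi_n\to\phi_\infty\in\dom[\formf]$ and $q_n\to q_\infty\in\C$. Setting $\psi_\infty:=\phi_\infty+q_\infty\gila\in\dom[\formb]$ and using the equivalence once more, $\psi_n\to\psi_\infty$ in the $\formb$-form norm; so $\dom[\formb]$ is complete for that norm, i.e. $\formb$ is closed. Together with the bound above this proves the theorem.

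\emph{Main obstacle.} The arithmetic is short; the genuine point is the norm-equivalence step, which rests on the absence of any surviving $\phila$--$q$ cross term in \eqref{eq: formb alternative} and on the uniqueness and stability of the splitting $\psi=\phila+q\gila$ — precisely where the closedness of $\formf$ and the asymptotics \eqref{eq: gila asympt} (hence $\gila\notin\dom[\formf]$) enter. A minor thing to keep straight is that the two choices of $\la$ above (letting $\la\to0^+$ for the lower bound, $\la$ large for closedness) are both legitimate only because \eqref{eq: formb alternative} holds for every $\la\in\R^+$.
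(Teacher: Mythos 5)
Your argument is correct and follows essentially the same route as the paper: the lower bound is obtained from \eqref{eq: formb alternative} by dropping the nonnegative terms and then optimizing over $\la$ (letting $\la\to0^+$ for $\beta\geq0$, choosing $\la^{2\alpha}=|\beta|/\cal$ for $\beta<0$), and closedness is obtained by fixing $\la$ large enough that all coefficients in \eqref{eq: formb alternative} are positive, so that a Cauchy sequence for the form norm splits into a $\formf$-Cauchy sequence of regular parts plus a Cauchy sequence of charges, whose limits reassemble into an element of $\dom[\formb]$. The only cosmetic difference is that you phrase the closedness step as an explicit norm equivalence, while the paper argues directly on differences $\psi_n-\psi_m$; the substance is identical.
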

	
	\begin{remark}[Bosons]
		\label{rem: bosons}
		\mbox{}	\\
		The above Theorem does not cover the case of bosons $ \alpha = 0 $ simply because it can not be obtained as the (formal) limit $ \alpha \to 0^+ $. This is indeed made apparent from the fact that $ \cal \to  + \infty $, as $ \alpha \to 0^+ $, which generates a divergence in the last term of the quadratic form. This is in turn related to the change of behavior at the origin of the defect function $ \gila $, when $ \la^{\alpha} K_{\alpha}(\la r) $ is replaced with $ K_0(\la r) $. The asymptotics of the latter as $ r \to 0^+ $ is indeed $ K_0(\la r) = - \log r + \log \la/2 + \gamma_{\mathrm{E}} + \OO(r^2) $, $ \gamma_{\mathrm{E}} $ being the Euler constant.
		
		We do expect however that, if the quadratic form is properly modified to take into account the different properties of the defect function $  K_0(\la r) $, its closedness and boundedness from below hold true as well and the energy form (more precisely its restriction to $ L^2_{\mathrm{even}}(\R^2) $) of a particle with a point interaction at the origin (see, e.g., \cite{CCT}) is recovered.
	\end{remark}
	
	As we are  going to see the bound from below stated in \eqref{eq: lb formb} is sharp, i.e., it also provides the ground state energy (in fact, the energy of the unique negative eigenvalue) of the corresponding self-adjoint operator.
	
	Indeed, being $ \formb $ a family of closed and bounded from below quadratic forms, they uniquely identify a family of self-adjoint operators which are bounded from below as well:
	
	\begin{corollary}[Self-adjoint operators $ \hamb $]
		\label{cor: hamb}
		\mbox{}	\\
		The one-parameter family of self-adjoint operators associated to the forms $ \formb $, $ \alpha \in (0,1) $ and $ \beta \in \R $, is given by
		\beq
			\label{eq: hamb action}
			 \lf( \hamb + \la^2 \ri) \psi = \lf( \ham + \la^2 \ri) \phila,	
		\eeq
		\bml{
			  \dom\lf( \hamb \ri) =  \lf\{ \psi \in \Le \: \Big| \: \psi_n \in \dom\lf(\hamf\ri), \forall n \neq 0; \; \psi_0 = \phila + q \gila,  \ri. 	\\
			  \lf. \phila \in \dom\lf(\hamf\ri), \; q = - \frac{\Gamma(\alpha)}{2^{1-\alpha}(\beta + \cal \la^{2\alpha})}  \lim_{r \to 0^+} \frac{\alpha \phila(r) + r \phila^{\prime}(r)}{r^{\alpha}} \ri\},	 \label{eq: dom hamb}
		}
		where $ \lambda > 0 $ is free to choose provided $ \beta + \cal \la^{2\alpha} \neq 0 $.
		\newline
		Furthermore, the operators $ \hamb $ extend $ \ham $, i.e., $ \lf. \ham \ri|_{\dom(\ham)} = \ham $, and, conversely, any self-adjoint extension of $ \ham $ is included in the family $ \hamb $, $ \beta \in \R $.
	\end{corollary}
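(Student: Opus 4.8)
The plan is to invoke the representation theorem for closed, densely defined, lower bounded sesquilinear forms: by Thm.~\ref{teo: closed}, for every $\alpha\in(0,1)$ and $\beta\in\R$ the form $\formb$ uniquely determines a self-adjoint, lower bounded operator $\hamb$ with $\dom(\hamb)\subset\dom[\formb]$ and
\[
	\dom(\hamb)=\Big\{\psi\in\dom[\formb]\ \Big|\ \exists\,\eta\in\Le:\ \formb[\phi,\psi]=\braket{\phi}{\eta}\ \ \forall\phi\in\dom[\formb]\Big\},\qquad \hamb\psi=\eta,
\]
so the task is to make this condition explicit and to compare the outcome with the structure of the self-adjoint extensions of $\ham$. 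I would work throughout in the orthogonal decomposition \eqref{eq: hilb}: since $\gila\in\HH_0$ and, by \eqref{eq: formf alternative}, $\form$ (hence $\formb$) leaves each channel $\HH_{2k}$ invariant while on $\HH_{2k}$ with $k\neq0$ it coincides with the Friedrichs form $\formf$, Prop.~\ref{pro: formf} already forces $\psi_{2k}\in\dom(\hamf)$ and $\hamb\psi_{2k}=\ham\psi_{2k}$ for $k\neq0$; all the remaining analysis lives in $\HH_0$, where $\dom[\formb]\cap\HH_0=(\dom[\formf]\cap\HH_0)\dotplus\C\gila$ by \eqref{eq: dom formb}.

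In $\HH_0$, for $\psi=\phila+q\gila$ and $\phi=\chi_\la+p\gila$, polarising \eqref{eq: formb alternative} gives
\[
	\formb[\phi,\psi]=\form[\chi_\la,\phila]+\la^2\braket{\chi_\la}{\phila}-\la^2\braket{\phi}{\psi}+\big(\beta+\cal\la^{2\alpha}\big)\,\overline{p}\,q.
\]
Testing first against $\phi=\chi_\la\in C^\infty_0(\R^2\setminus\{0\})\cap\HH_0$ (so $p=0$), integrating by parts away from the origin and using $(\ham+\la^2)\gila=0$ on $\R^2\setminus\{0\}$ (cf.\ \eqref{eq: defect}), one identifies $\eta$ on $\R^2\setminus\{0\}$ with the distribution $\ham\psi=(\ham+\la^2)\phila-\la^2\psi$; requiring $\eta\in\Le$ then forces $\ham\phila\in L^2$, i.e.\ $\phila\in\dom(\hamf)$ by Prop.~\ref{pro: formf}, and yields the action \eqref{eq: hamb action}. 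Since radially symmetric functions in $C^\infty_0(\R^2\setminus\{0\})$ form a form core for $\formf|_{\HH_0}$ (by \eqref{eq: dom formf} and the boundedness of the projection onto $\HH_0$), the only remaining requirement is $\formb[\gila,\psi]=\braket{\gila}{\eta}$; in the polarised identity the terms $\form[\chi_\la,\phila]+\la^2\braket{\chi_\la}{\phila}$ cancel $\braket{\chi_\la}{(\ham+\la^2)\phila}$ (by definition of $\hamf$) and the terms $-\la^2\braket{\phi}{\psi}$ cancel, leaving the scalar constraint
\[
	\big(\beta+\cal\la^{2\alpha}\big)\,q=\braket{\gila}{(\ham+\la^2)\phila}.
\]
To evaluate the right-hand side I would write it as $\lim_{\eps\to0^+}\int_{\{r>\eps\}}\overline{\gila}\,(\ham+\la^2)\phila\,\diff\rv$, integrate by parts twice and use $(\ham+\la^2)\gila=0$ on $\{r>\eps\}$, so that only a boundary Wronskian at $r=\eps$ survives; inserting \eqref{eq: gila asympt} together with the asymptotics $\phila_0\underset{r\to0^+}{\sim}r^{\alpha}+o(r)$ from \eqref{eq: dom hamf} (and checking that all subleading contributions vanish in the limit because $\alpha<1$) produces exactly the relation for $q$ in \eqref{eq: dom hamb}, the constant $\Gamma(\alpha)/2^{1-\alpha}$ and the combination $\alpha\phila+r\phila'$ coming from the leading coefficients in \eqref{eq: gila asympt}. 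Conversely, if $\phila\in\dom(\hamf)$ and that relation holds, setting $\eta:=(\ham+\la^2)\phila-\la^2\psi\in\Le$ one checks $\formb[\phi,\psi]=\braket{\phi}{\eta}$ on the core above and on $\gila$, hence for all $\phi$ by density, so $\psi\in\dom(\hamb)$; this proves \eqref{eq: dom hamb}. The independence of the statement on the choice of $\la$ (for $\beta+\cal\la^{2\alpha}\neq0$) is inherited from that of $\formb$, proved after \eqref{eq: formb alternative}: the pair $(\phila,q)$ changes with $\la$ but $\hamb$ does not, the forbidden value of $\la$ being the one for which $-\la^2$ is the (unique negative) eigenvalue appearing in \eqref{eq: lb formb}.

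Finally I would compare $\hamb$ with the extensions of $\ham$. For $\psi\in C^\infty_0(\R^2\setminus\{0\})\cap\Le$ one has $q=0$, $\phila=\psi\in\dom(\hamf)$, the limit in \eqref{eq: dom hamb} vanishes, and $\hamb\psi=\ham\psi$; as this set is a core for $\ham$, taking closures gives $\ham\subseteq\hamb$ for every $\beta\in\R$ (and likewise $\ham\subseteq\hamf$), which is the extension property. Conversely, decomposing $\ham$ along \eqref{eq: hilb}: in each channel $\HH_{2k}$ with $k\neq0$ the effective coupling $(2k+\alpha)^2\geq(2-\alpha)^2>1$ puts the radial operator in the limit-point case at $r=0$, hence it is essentially self-adjoint there, while in $\HH_0$ the radial operator $-\partial_r^2-r^{-1}\partial_r+\alpha^2 r^{-2}$ is limit-circle at $0$ and limit-point at $\infty$; thus $\ham$ has deficiency indices $(1,1)$ and its self-adjoint extensions form a one-parameter family, with the Friedrichs extension as a distinguished element. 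Since the operators $\hamb$, $\beta\in\R$, are pairwise distinct self-adjoint extensions of $\ham$ — e.g.\ by \eqref{eq: lb formb} they have pairwise different bottoms for $\beta<0$, and different domains in general via the boundary relation in \eqref{eq: dom hamb} — and $\hamf$ (formally $\beta=+\infty$) supplies the remaining one, the family $\{\hamb\}_{\beta\in\R}\cup\{\hamf\}$ exhausts all self-adjoint extensions. Equivalently one may argue directly: $\dom(\ham^*)\cap\HH_0=(\dom(\hamf)\cap\HH_0)\dotplus\C\gila$ for a fixed $\la$, and the vanishing of the boundary form $\braket{\ham^*\psi}{\phi}-\braket{\psi}{\ham^*\phi}$ on $\dom(A)$, for a self-adjoint extension $A$, pins down a real-linear relation between $q$ and $\lim_{r\to0^+}r^{-\alpha}(\alpha\phila+r\phila')$, i.e.\ $A=\hamb$ for some $\beta\in\R$ or $A=\hamf$.

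The main obstacle will be the Wronskian/boundary-term computation in $\HH_0$: one must justify integrating by parts down to $r=0$ for $\phila\in\dom(\hamf)$ — which requires that the $o(r)$ remainder in $\phila_0\sim r^\alpha+o(r)$ be differentiable with a controlled derivative near the origin, hence a little ODE regularity beyond \eqref{eq: dom hamf} — and then extract the exact constant $\Gamma(\alpha)/2^{1-\alpha}$ together with the precise combination $\alpha\phila+r\phila'$ from \eqref{eq: gila asympt}, tracking the signs and the $\la$-dependent subleading term. A secondary point is the bookkeeping in the converse direction, i.e.\ making the deficiency-index count (or a boundary-triple argument) airtight so that no self-adjoint extension is missed.
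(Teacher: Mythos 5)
Your proposal is correct and, for the main part of the statement (the action \eqref{eq: hamb action} and the domain \eqref{eq: dom hamb}), follows essentially the same route as the paper: apply the form--operator correspondence \eqref{eq: dom op} to the polarized form \eqref{eq: sesquilinear}, test first against elements of $\dom[\formf]$ to force $\phila\in\dom(\hamf)$ and read off the action, then test against $\gila$ to obtain the scalar relation $(\beta+\cal\la^{2\alpha})q=\braketr{\gila}{(\hamf+\la^2)\phi_{v,0}}$, and evaluate the right-hand side by the boundary (Wronskian) terms in \eqref{eq: int by parts} together with the asymptotics \eqref{eq: gila asympt}; the technical caveat you flag about integrating by parts down to $r=0$ is exactly the point the paper handles via the vanishing at the origin of elements of $\dom(\hamf)$ from Prop.~\ref{pro: formf}. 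The only genuine divergence is in the final claim that the family exhausts all self-adjoint extensions: you give a self-contained Weyl limit-point/limit-circle count (limit point at $0$ in every channel $\HH_{2k}$, $k\neq 0$, limit circle in $\HH_0$, hence deficiency indices $(1,1)$) plus a boundary-form argument pinning down the real-linear relation between $q$ and the limit, whereas the paper imports the deficiency-index computation and the explicit von Neumann family $\hamg$ from \cite{AT} and argues that \eqref{eq: dom hamg} is contained in \eqref{eq: dom hamb}, so the two one-parameter families coincide. Your version is more self-contained (no reliance on \cite{AT}); if you go the ``one circle injects into another'' route rather than the boundary-triple one, do spell out why injectivity of $\beta\mapsto\hamb$ on $\R\cup\{\infty\}$ implies surjectivity onto the $U(1)$ of extensions, since mere cardinality is not enough.
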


	\begin{remark}[Boundary condition]
		\label{rem: bc}
		\mbox{}	\\
		Functions in the domain of $ \hamb $ have thus to satisfy a suitable boundary condition at $ 0 $. Note that such a condition is well-posed since $ \phi_{\la} $ is radial (it belongs to the zero-angular momentum sector) and in $ H^2(\R^2) $ (see Proposition \ref{pro: formf}). In fact, we can assume that 
		%\beq
			$\phila(r) \underset{r \to 0^+}{=} d_{\la} r^{\alpha} (1 + o(1)) $,
		%\eeq
		for some $ d_{\la} \in \R $. Then, %the boundary condition in 
		\eqref{eq: dom hamb} reads
		\beq
			q = - \frac{2^{\alpha} \alpha \Gamma(\alpha)}{\beta + \cal \la^{2\alpha}} d_{\la},
		\eeq
		so making apparent that $ q = 0 $, if $ \phila $ vanishes faster than $ r^{\alpha} $.
	\end{remark}
	
	\begin{remark}[$s-$wave perturbation]
		\label{rem: s wave}
		\mbox{}	\\
		It is clear from the domain definition \eqref{eq: dom hamb} that $ \hamb $ defines an $s-$wave perturbation of the operator $ \ham $: it is indeed only the $0-$momentum Fourier component $ \psi_0 $ which has to satisfy the decomposition as in \eqref{eq: dom hamb}. Furthermore, $ \lf. \hamb \ri|_{\HH_{2k}} = \hamf $, for any $ k \in \Z, k \neq 0 $.
	\end{remark}
	
	Since all the quadratic forms of the positive extensions (i.e., those with $ \beta \geq 0 $) have the same domain, it is not immediate to identify the Krein extension \cite[Thm. 13.12]{S}. However, in perfect analogy with point interactions,	it seems natural to associate the Krein extension to the one labelled by $\beta=0$, since it marks the threshold for the existence of negative bound states. Furthermore, from \eqref{eq: dom hamb}, one can see that both the limits $\beta\to+\infty$ and $\beta\to-\infty$ recover the Friedrichs extension, since in both cases the charge $q$ is identically zero.
	%\begin{remark}[Bosons]
		%\label{rem: bosons}
		%\mbox{}	\\
		%The case of bosons $ \alpha = 0 $ is excluded from the statement of Corollary \ref{cor: hamb} because the boundary condition in \eqref{eq: dom hamb} has to be suitably modified. In fact, the proof of Corollary \ref{cor: hamb} applies to $ \alpha = 0 $ too (see Remark \ref{rem: bosons proof}) and the domain of a two-dimensional point interaction at the origin is recovered\footnote{Actually, the boundary condition as written, e.g., in \cite{CCT}, is slightly different from \eqref{eq: bc bosons}, because of the additive factor, but the more familiar one can be easily recovered via a redefinition of the parameter $ \beta $.} of , i.e.,
		%\beq
			%\label{eq: bc bosons}
			%\dom\lf( H_{0,\beta} \ri) = \lf\{ \psi \: \big| \: \psi = \phila + q \gila, \phila \in \dom\lf(\hamf\ri), \phila(0)  = \lf( \beta + \pi \ri) q \ri\}.	
		%\eeq
		%Combined with the action described in \eqref{eq: dom hamb}, this indeed shows that $ H_{0,\beta} $ is a self-adjoint realization of the formal operator $ - \Delta + a \delta(\rv) $.
	%\end{remark}		

	The second part of the Corollary, i.e., the fact that the family $ \hamb $ exhausts all possible self-adjoint realizations of $ \ham $ is in fact a consequence of the results proven in \cite{AT}, where the self-adjoint extensions of $ \ham $ in $ L^2(\R^2) $ are thoroughly investigated. Notice that the physical goal in \cite{AT} was not to study anyons but rather rigorously derive the Hamiltonian of a quantum particle in a Aharonov-Bohm (AB) magnetic flux centered at the origin. Formally, the operator coincides with $ \ham $ but the space of states is much wider, since no symmetry restriction is imposed. In \cite{AT} it is proven that $ \ham $ admits in $ L^2(\R^2) $ a four-parameter family of self-adjoint extensions \cite[Eqs. (2.18) and (2.19)]{AT}, whose domains and actions can be explicitly characterized: all extensions are indeed perturbations of the operator $ \hamf $ living in the $s-$ and $p-$wave subspaces. This can be seen by studying the deficiency spaces associated to $ \ham $, which are spanned by four functions belonging to $ \HH_0 $ and $ \HH_{-1} $, respectively. 
	
	Once the operator $ \ham $ is restricted to $ \Le $ and functions with odd angular momentum forbidden, only two solutions of the deficiency equations survive and, consequently, a one-parameter family of self-adjoint extensions is left: let us denote by $ \gamma \in [0, 2\pi) $ a real parameter, then the operator family is given by
\beq
	\hamg \psi =  \ham \phi + i \mu \psiap - i \mu e^{i\gamma} \psiam,	
\eeq
\beq
	\label{eq: dom hamg}
	\dom\lf( \hamg \ri) =  \lf\{ \psi \: \Big| \: \psi = \phi + \mu \psiap + \mu e^{i\gamma} \psiam, \phi \in \dom\lf(\hamf\ri), \mu \in \C \ri\},	 
\eeq
where the ($L^2$-normalized) deficiency functions $ \psi_{\alpha,\pm} $ are
\beqn
	\psi_{\alpha,+}(r) &=& \tx\frac{\sqrt{2 \cos(\pi\alpha/2)}}{\pi} \, K_{\alpha}\big( e^{-i\pi/4} r\big),	\nonumber	\\
	\psi_{\alpha,-}(r) &=& \tx\frac{\sqrt{2 \cos(\pi\alpha/2)}}{\pi} \, e^{i \pi \alpha/2}  \, K_{\alpha}\big( e^{i\pi/4} r\big).
\eeqn
As expected, $ \hamg $ differs from $\ham $ only in the subspace with zero angular momentum, since $ \psi_{\alpha,\pm} \in \HH_0 $. In the proof of Cor. \ref{cor: hamb} we show that the family \eqref{eq: dom hamg} is in fact contained in \eqref{eq: dom hamb} and thus the two must coincide.

	\begin{proposition}[Spectral properties of $ \hamb $]
		\label{pro: spectrum}
		\mbox{}	\\
		For any $ \alpha \in (0,1) $ and any $ \beta \in \R $, $ \sigma \lf( \hamb \ri) = \sigma_{\mathrm{pp}}\lf( \hamb \ri) \cup \sigma_{\mathrm{ac}}\lf( \hamb \ri) $, with $ \sigma_{\mathrm{ac}}\lf( \hamb \ri) = \R^+ $ and 
		\beq
			\label{eq: eigenvalue}
			\sigma_{\mathrm{pp}}\lf(\hamb\ri) =
			\begin{cases}
				\lf\{ - \lf( \disp\frac{|\beta| \sin(\pi\alpha)}{\pi^2} \ri)^{\frac{1}{\alpha}} \ri\},		&	\mbox{if } \beta < 0;	\\
				\emptyset,			&	\mbox{otherwise}.
			\end{cases}
		\eeq
		%Furthermore, the resolvent $ \lf( \hamb + \la^2 \ri)^{-1} $ of $ \hamb $ for $ \la \in \R^+ $ is given by the integral operator with kernel
		%\beq
			%\label{eq: green}
			%\mathcal{G}_{\la}(\rv; \rvp) = .... 
		%\eeq
	\end{proposition}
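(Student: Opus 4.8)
The plan is to diagonalize $\hamb$ into angular‑momentum channels, reduce everything to the $s$‑wave, and compare with the Friedrichs operator. By Remark \ref{rem: s wave} one has the orthogonal decomposition $\hamb=\bigoplus_{k\in\Z}\hamb|_{\HH_{2k}}$ with $\hamb|_{\HH_{2k}}=\hamf|_{\HH_{2k}}$ for every $k\neq0$. After the unitary $u(r)\mapsto r^{1/2}u(r)$ from $L^2(\R^+,r\diff r)$ onto $L^2(\R^+,\diff r)$, each $\hamf|_{\HH_{2k}}$ becomes the half‑line Schr\"odinger operator $-\partial_r^2+\big((2k+\alpha)^2-\tfrac14\big)r^{-2}$ (with the Friedrichs boundary condition at $0$ in the sole limit‑circle channel $k=0$), while $\hamb|_{\HH_0}$ is another self‑adjoint realization of the same $k=0$ differential expression. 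These are all classical Bessel‑type operators: regular at $+\infty$, with explicit Bessel‑quotient Weyl--Titchmarsh function, purely absolutely continuous spectrum equal to $\R^+$ of multiplicity one, plus — only for $\hamb|_{\HH_0}$ and only when $\beta<0$ — at most one negative eigenvalue, and no singular continuous spectrum in any channel. In particular $\sigma(\hamf)=\sigma_{\mathrm{ac}}(\hamf)=\R^+$.

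First I would settle the continuous spectrum of $\hamb$. From \eqref{eq: hamb action} the resolvents of $\hamb$ and $\hamf$ differ by a rank‑one operator supported in $\HH_0$ (Krein's formula for the symmetric restriction of $\ham$ to $\HH_0$, whose associated $Q$‑function is $z\mapsto-\cal(-z)^{\alpha}$ — this is precisely what identity \eqref{eq: cal}, equivalently $(\lambda_1^2-\lambda_2^2)\braket{G_{\lambda_1}}{G_{\lambda_2}}=\cal(\lambda_2^{2\alpha}-\lambda_1^{2\alpha})$, encodes, with the branch of $(-z)^{\alpha}$ holomorphic on $\C\setminus\R^+$ and positive on $\R^-$). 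Being trace class, this perturbation preserves the essential spectrum by Weyl's theorem and the absolutely continuous spectrum by the Kato--Rosenblum theorem, so $\sigma_{\mathrm{ess}}(\hamb)=\R^+$ and $\sigma_{\mathrm{ac}}(\hamb)=\R^+$; together with the channel‑wise absence of singular continuous spectrum this gives $\sigma(\hamb)=\sigma_{\mathrm{pp}}(\hamb)\cup\sigma_{\mathrm{ac}}(\hamb)$, and it remains only to compute $\sigma_{\mathrm{pp}}(\hamb)$.

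Next I would locate the point spectrum directly. Suppose $\hamb\psi=E\psi$ with $\psi\neq0$. The components $\psi_{2k}$, $k\neq0$, solve $\hamf\psi_{2k}=E\psi_{2k}$ and hence vanish, because the channels $\hamf|_{\HH_{2k}}$, $k\neq0$, have no eigenvalues; thus $\psi=\psi_0\in\HH_0$. If $E\geq0$, then $\psi_0$ solves $(-\nablaa^2-E)\psi_0=0$ classically on $\R^2\setminus\{0\}$, hence is a combination of $J_\alpha(\sqrt{E}\,r),Y_\alpha(\sqrt{E}\,r)$ for $E>0$ or of $r^{\pm\alpha}$ for $E=0$, none of which is square‑integrable at $+\infty$ for $\alpha\in(0,1)$; so $\psi_0=0$, a contradiction, and $\sigma_{\mathrm{pp}}(\hamb)\subset\R^-$. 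If $E=-\kappa^2<0$, fix $\lambda>0$ with $\beta+\cal\lambda^{2\alpha}\neq0$ and write $\psi_0=\phila+q\gila$ as in \eqref{eq: dom hamb}; then \eqref{eq: hamb action} gives $(\hamf-E)\phila=(E+\lambda^2)q\,\gila$. If $q=0$ then $(\hamf+\kappa^2)\phila=0$ forces $\phila=0$ and $\psi=0$; so $q\neq0$, and using \eqref{eq: defect} to check $(\hamf+\kappa^2)(G_\kappa-\gila)=(\lambda^2-\kappa^2)\gila$ one gets $\phila=q(G_\kappa-\gila)$, hence $\psi_0=q\,G_\kappa$. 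So every negative eigenfunction is a multiple of $G_\kappa$, and $-\kappa^2$ is an eigenvalue exactly when $G_\kappa\in\dom(\hamb)$. Imposing the boundary relation of \eqref{eq: dom hamb} on $\psi_0=qG_\kappa$ — equivalently, requiring $z=-\kappa^2$ to be a pole of the scalar prefactor $(\beta+\cal(-z)^{\alpha})^{-1}$ of the Krein formula above — and feeding in the asymptotics \eqref{eq: gila asympt} together with the value \eqref{eq: cal}, this collapses to $\beta+\cal\,\kappa^{2\alpha}=0$, which has a (necessarily unique) solution $\kappa>0$ iff $\beta<0$: then $\kappa^{2\alpha}=|\beta|/\cal=|\beta|\sin(\pi\alpha)/\pi^2$, i.e.\ $E=-\big(|\beta|\sin(\pi\alpha)/\pi^2\big)^{1/\alpha}$, with one‑dimensional eigenspace $\mathrm{span}\,G_\kappa$. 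This is \eqref{eq: eigenvalue}, and the value found coincides with the lower bound of Theorem \ref{teo: closed}, which is thereby shown to be sharp.

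The main obstacle in this scheme is the first part, namely writing Krein's resolvent formula in exactly the normalization for which \eqref{eq: cal} is the associated $Q$‑function: this is what makes the negative‑eigenvalue condition reduce to the clean equation $\beta+\cal\kappa^{2\alpha}=0$ after the Bessel‑function bookkeeping, and it is also the cleanest way to certify simultaneously that no singular continuous spectrum and no eigenvalues embedded in $\R^+$ are created by the perturbation. Everything downstream of that is then a short computation.
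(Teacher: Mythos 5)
Your argument is correct, and at its core it rests on the same two pillars as the paper's proof: the resolvents of $ \hamb $ and $ \hamf $ differ by a rank-one (hence trace-class) operator supported in $ \HH_0 $, and the negative eigenfunction, when it exists, is a multiple of $ G_{\kappa} $ with $ \beta + c_{\alpha} \kappa^{2\alpha} = 0 $. The paper's own proof is however much terser: it invokes the deficiency indices $ (1,1) $ of $ \ham $ to conclude that every self-adjoint extension shares the essential spectrum of $ \hamf $ and can acquire at most one discrete eigenvalue, and then simply \emph{verifies} that $ G_{\bar{\la}} $ with $ \bar{\la} = (|\beta|/c_{\alpha})^{1/2\alpha} $ belongs to $ \dom(\hamb) $ --- the boundary condition in \eqref{eq: dom hamb} degenerates exactly when $ \beta + c_{\alpha} \bar{\la}^{2\alpha} = 0 $, so one may take $ \phi_{\bar{\la}} = 0 $ and $ q \neq 0 $, whence \eqref{eq: hamb action} gives $ \hamb G_{\bar{\la}} = - \bar{\la}^2 G_{\bar{\la}} $. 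Your version replaces this ``guess and verify'' step by a derivation, showing that \emph{every} negative eigenfunction is proportional to some $ G_{\kappa} $ and that the boundary condition then forces $ \beta + c_{\alpha}\kappa^{2\alpha} = 0 $; more importantly, it supplies two points the paper leaves implicit, namely the absence of eigenvalues embedded in $ \R^+ $ (via the non-$L^2$ behaviour of $ J_{\alpha}, Y_{\alpha} $ at infinity) and the absence of singular continuous spectrum (via the channel-wise Weyl--Titchmarsh analysis of the half-line Bessel operators), neither of which follows from the deficiency-index argument alone. Two small polish items: fix the sign convention for the $Q$-function once and for all (you introduce it as $ -c_{\alpha}(-z)^{\alpha} $ but then use the prefactor $ (\beta + c_{\alpha}(-z)^{\alpha})^{-1} $), and write out the Bessel bookkeeping that reduces the boundary condition on $ q G_{\kappa} $ to $ \beta + c_{\alpha}\kappa^{2\alpha} = 0 $, since that is precisely where the normalization of $ c_{\alpha} $ in \eqref{eq: cal} and the asymptotics \eqref{eq: gila asympt} enter.
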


\subsection{Interacting anyons}
\label{sec: interacting}

Once the Hamiltonian of two non-interacting anyons has been rigorously defined as a suitable self-adjoint operator belonging to the family $ \hamb $, it is natural to wonder whether the same results apply to a system of two anyons with pairwise interaction. As discussed in the Introduction, this amounts to study the self-adjoint realizations of the operator
\bdm
	\hamv = \ham + V,
\edm
defined, e.g., on the domain of smooth functions with support away from the origin in $ \Le $. The generalization can not however hold true for any $ V $, because the properties of the domain of the Friedrichs extension (if any) strongly depends on the potential $ V $. A trivial case in which Theorem \ref{teo: closed} and Corollary \ref{cor: hamb} do apply is given by small perturbations of $ \hamf $ in the sense of Kato: if $ V $ is Kato-small either in operator or quadratic form sense w.r.t. $ \ham $, self-adjointness of $ \hamvf $, i.e., the Friedrichs extension of $ \hamv $, easily follows. The remaining operators $ \hamvb $ in the family are then obtained by perturbing the corresponding $ \hamb $.

As a starting point of our discussion, we observe that if $ V $ is real and bounded from below (as we are going to assume in next Ass. \ref{ass: V}), it certainly admits self-adjoint extensions, because it commutes with the complex conjugation. The number of parameters to label such extensions is however unknown.

%In this Sect. we want to extend the results proven for free anyons in presence of an interaction potential, which is not small w.r.t. to $ \ham $.  Before proceeding further, however, we have to 
Let us specify the assumptions we make on the interaction:

	\begin{assumption}[Interaction potential $ V $]
		\label{ass: V}
		\mbox{}	\\
		Let $ V = V(r) $ be a real radial function and let $ V_{\pm} $ denote the positive and negative parts of $ V $, respectively, i.e., $ V = V_+ - V_- $. Then, we assume that
		\begin{itemize}
			\item $ V_+ \in L^2_{\mathrm{loc}}(\R^+) \cap L^{\infty}([0,\eps)) $;
			\item $ V_- \in L^{\infty}(\R^+) $.
		\end{itemize}
	\end{assumption}
	
The above assumptions are certainly not optimal but at least imply that $ H_V : = - \Delta + V  $ is a semi-bounded self-adjoint operator (see, e.g., \cite{LL}) with domain $ \dom(H_V) \subset H^2(\R^2) $. In fact, the same thing holds for $ \hamv $: let $ \formvf $ be the quadratic form
\beq
	\label{eq: formvf}
	\formvf[\psi] : = \formf[\psi] + \int_{\R^2} \diff \rv \: V(r) \lf| \psi \ri|^2,
\eeq
with domain
\beq
	\label{eq: dom formvf}
	\dom\lf[ \formvf \ri] : = \overline{C^{\infty}_0(\R^2 \setminus \{ 0 \})}^{\lf\| \; \ri\|_{\alpha,V}} \cap \Le,
\eeq
where
%\beq
	$\lf\| \psi \ri\|_{\alpha,V}^2 : = \formvf[\psi] + V_0 \lf\| \psi \ri\|_2^2 $,
%\eeq
and 
\beq
	V_0 : = \sup_{r \in \R^+} V_-(r).
\eeq

%The analogue of Proposition \ref{pro: formf} is the following 

	\begin{proposition}[Friedrichs extension]
		\label{pro: formvf}
		\mbox{}	\\
		Let Ass. \ref{ass: V} hold true. Then, the quadratic form $ \formvf $ is closed and bounded from below on $ \dom[\formvf] $ for any $ \alpha \in [0,1] $. Furthermore, for any $ \alpha \in (0,1) $,
		\beq
			\label{eq: dom formvf zero}
			\dom[\formvf] \subset H^1(\R^2).
		\eeq
		The associated self-adjoint operator $ \hamvf $ acts as $ \hamv $ on the domain
		\beq
			\label{eq: dom hamvf}
			\dom\lf( \hamvf \ri) = \lf\{ \psi \in \dom\lf( \hamf \ri) \: \big| \: V \psi \in L^2 \ri\}.
		\eeq
	\end{proposition}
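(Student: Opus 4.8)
The plan is to reduce everything to Proposition \ref{pro: formf} together with a Kato-type perturbation argument for the quadratic form. First I would record the key analytic input guaranteed by Ass. \ref{ass: V}: since $V_-\in L^\infty(\R^+)$, the form $\psi\mapsto\int_{\R^2}V_-|\psi|^2$ is bounded by $V_0\|\psi\|_2^2$; and since $V_+\in L^2_{\mathrm{loc}}(\R^+)\cap L^\infty([0,\eps))$, one shows that $\int_{\R^2}V_+|\psi|^2<\infty$ for every $\psi\in\dom[\formf]$. The last point uses \eqref{eq: dom formf zero}, i.e.\ $\dom[\formf]\subset H^1(\R^2)$ for $\alpha\in(0,1)$, so that $\psi$ is locally in every $L^p$, $p<\infty$, by Sobolev embedding in two dimensions; combined with the local $L^2$ bound on $V_+$ and a diamagnetic/cutoff argument near infinity one gets that the potential term is finite (and in fact infinitesimally form-small w.r.t.\ $\formf$, or at least $\formf$-bounded with relative bound $<1$ after a standard localization). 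For $\alpha=0$ one instead invokes the known semiboundedness of $H_V=-\Delta+V$ from \cite{LL} directly.

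With this in hand, closedness and semiboundedness of $\formvf$ on $\dom[\formvf]$ follow from the KLMN theorem: $\formvf=\formf+\int V$, the negative part $-\int V_-$ is bounded and hence a small form perturbation, and $\int V_+\geq 0$ is a closable nonnegative form whose form sum with the closed form $\formf$ is again closed. One must check that the form domain is exactly the closure of $C^\infty_0(\R^2\setminus\{0\})\cap\Le$ in the graph norm $\|\cdot\|_{\alpha,V}$: since on $C^\infty_0(\R^2\setminus\{0\})$ the norms $\|\cdot\|_{\alpha,V}$ and $\|\cdot\|_\alpha$ are equivalent (by the relative form-boundedness just established, together with $V_0\|\psi\|_2^2$ dominating the negative contribution and the $V_+$ term being controlled on such compactly supported smooth functions), the two completions coincide inside $\Le$, giving $\dom[\formvf]=\dom[\formf]$ as sets with equivalent norms. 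The inclusion \eqref{eq: dom formvf zero} is then inherited verbatim from \eqref{eq: dom formf zero}.

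For the operator characterization \eqref{eq: dom hamvf}, I would argue that the self-adjoint operator $\hamvf$ associated with the form $\formvf$ is the form sum $\hamf\dotplus V$, and then identify its domain. One inclusion is immediate: if $\psi\in\dom(\hamf)$ and $V\psi\in L^2$, then for all $\varphi\in\dom[\formvf]=\dom[\formf]$ one has $\formvf(\varphi,\psi)=\langle(-\nablaa^2)\varphi,\psi\rangle\cdot(\text{integrated by parts})+\langle\varphi,V\psi\rangle=\langle\varphi,(\hamv\psi)\rangle$ with $\hamv\psi=\ham\psi+V\psi\in L^2$, so $\psi\in\dom(\hamvf)$ and $\hamvf\psi=\hamv\psi$. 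For the reverse inclusion, take $\psi\in\dom(\hamvf)$; then there is $f\in L^2$ with $\formvf(\varphi,\psi)=\langle\varphi,f\rangle$ for all $\varphi\in\dom[\formf]$. Testing against $\varphi\in C^\infty_0(\R^2\setminus\{0\})$ shows that $(-\nablaa^2+V)\psi=f$ in the distributional sense away from the origin, hence $\ham\psi=f-V\psi$; the local elliptic regularity for the Aharonov--Bohm operator (as used in Prop.\ \ref{pro: formf}) promotes this to $\psi\in\dom(\hamf)$ once we know $V\psi\in L^2$. To obtain $V\psi\in L^2$: the negative part gives $V_-\psi\in L^2$ trivially since $V_-$ is bounded; for $V_+\psi$ one uses that $\psi\in\dom[\formf]\subset H^1$ so $\psi$ is locally bounded (again 2D Sobolev, after noting $\psi\in H^2_{\mathrm{loc}}$ away from $0$ and handling the origin via the explicit asymptotics $\psi_0\sim r^\alpha$, $\psi_n=o(r)$ from \eqref{eq: dom hamf}), while $V_+\in L^2_{\mathrm{loc}}$ with $V_+\in L^\infty$ near $0$, so $V_+\psi\in L^2$ locally, and near infinity $\psi\in L^2$ together with the behavior of $V$ dictated by Ass.\ \ref{ass: V} closes the argument. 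Combining the two inclusions yields \eqref{eq: dom hamvf}.

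The main obstacle I anticipate is the bookkeeping around the origin: one must be careful that the $V_+$ term is genuinely finite on all of $\dom[\formf]$ and not merely on the smooth core, which is why the a priori regularity $\dom[\formf]\subset H^1(\R^2)$ from Prop.\ \ref{pro: formf} is doing essential work, and that the completions defining $\dom[\formvf]$ and $\dom[\formf]$ really do coincide despite the singular AB potential. The perturbative/Kato part is otherwise routine once that regularity is granted.
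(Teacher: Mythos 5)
Your overall strategy (reduce to Prop.~\ref{pro: formf} plus a perturbation argument) is reasonable, and the paper's own proof is in fact much terser than yours: it simply observes that closedness and semiboundedness are immediate from the definition of $\dom[\formvf]$ as a completion in the norm $\|\cdot\|_{\alpha,V}$, and that since $V\in L^{\infty}([0,\eps))$ the asymptotics at the origin are unchanged, whence \eqref{eq: dom hamvf}. However, your write-up contains a genuine error. You claim that $\int_{\R^2}V_+|\psi|^2<\infty$ for \emph{every} $\psi\in\dom[\formf]$, that $V_+$ is form-bounded (even infinitesimally) relative to $\formf$, and consequently that $\dom[\formvf]=\dom[\formf]$ with equivalent norms. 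Assumption~\ref{ass: V} only requires $V_+\in L^2_{\mathrm{loc}}(\R^+)\cap L^{\infty}([0,\eps))$: there is no control whatsoever on $V_+$ at infinity, so $V_+$ may grow arbitrarily fast (e.g.\ $V_+(r)=e^{r^2}$). For such potentials $V_+$ is not form-bounded with respect to $\formf$, a generic $H^1$ function has $\int V_+|\psi|^2=+\infty$, and $\dom[\formvf]$ is a \emph{proper} subset of $\dom[\formf]$. The KLMN step as you state it therefore fails. The correct route is the opposite inclusion: since $V_-\leq V_0$, one has $\|\psi\|_{\alpha,V}^2\geq\formf[\psi]+\int V_+|\psi|^2\geq\|\psi\|_{\alpha}^2$ on the common core, so the $(\alpha,V)$-completion embeds into the $\alpha$-completion, giving $\dom[\formvf]\subseteq\dom[\formf]\subset H^1(\R^2)$ directly; closedness is automatic because the form is by construction the closure of a densely defined form bounded below by $-V_0\|\cdot\|_2^2$.

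The same gap resurfaces in your proof of the reverse inclusion for \eqref{eq: dom hamvf}: the assertion that ``near infinity $\psi\in L^2$ together with the behavior of $V$ dictated by Ass.~\ref{ass: V} closes the argument'' for $V_+\psi\in L^2$ is unjustified, since Ass.~\ref{ass: V} dictates nothing about $V_+$ at infinity. Showing that $\psi\in\dom(\hamvf)$ implies \emph{separately} $\ham\psi\in L^2$ and $V\psi\in L^2$ (rather than just their sum) is the genuinely nontrivial point here; it requires an argument exploiting the positivity of $V_+$ and of the kinetic form (e.g.\ expanding $\|(\ham+V_+)\psi\|_2^2$ and controlling the cross term, or an IMS localization), not a Sobolev embedding. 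Your treatment of the origin, on the other hand, is in line with the paper's: boundedness of $V$ near $r=0$ means the $r^{\alpha}$ asymptotics of the $s$-wave and the $o(r)$ behavior of the higher modes from \eqref{eq: dom hamf} carry over unchanged, which is the one point the authors single out.
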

	
	In the proof of the above result is obviously a key point the assumption that $ V $ is bounded in $ L^{\infty}$  above and below in a neighborhood of the origin as well as the fact that $ V $ is radial. While the latter one is less relevant and might be relaxed, the behavior of $ V $ close to the origin affects the asymptotics of functions in $ \dom(\hamvf) $ there.
	
	Given Prop. \ref{pro: formvf}, we can extend Thm. \ref{teo: closed}. We set for $ \alpha \in (0,1) $ and $ \beta \in \R $
\beq
	\label{eq: formvb}
	\framebox{$ \formvb[\psi] = \formv[\phila] +\la^2 \lf\| \phila \ri\|^2_2 - \la^2 \lf\| \psi \ri\|_2^2 + \lf( \beta + \cal \la^{2\alpha} \ri) \lf| q \ri|^2, $}
\eeq
where $ \psi $ belongs to the domain
\beq
	\label{eq: dom formvb}
	\dom[\formvb] = \lf\{ \psi \in \Le \: \big| \: \psi = \phila + q \gila, \phila \in \dom[\formvf], q \in \C \ri\},
\eeq
and $ G_{\la} $ , $ \la \in \R^+ $, and $ \cal $ have been defined in \eqref{eq: gila} and \eqref{eq: cal}, respectively. Again, the Friedrichs form $ \formf $ is included in the family and recovered for $ \beta = \infty $. 

Note that the decomposition in the domain \eqref{eq: dom formvb} is given in terms of $ \gila $, which is {\it not} a deficiency function for $ \hamv $, but rather for $ \ham $. However, this has no consequences at the level of quadratic forms: although we do not explicitly know $ \ker(\hamv^* + \la^2) $, any function belonging there must have the same singularity at the origin as $ \gila $. Furthermore, the decomposition is well-posed because $ \gila - G_{\la'} \in \dom[\formvf] $, for $ \la \neq \la' $, thanks to Prop. \ref{pro: formvf}.

%We complete this Sect. by showing that the forms $ \formvb $ are closed and bounded from below. Next, we characterize the associated Schr\"{o}dinger operators. Notice that we do not know whether such operators cover all possible self-adjoint extensions of $ \hamv $, since we have no access at its deficiency spaces.

	\begin{theorem}[Closedness and boudedness from below of $ \formvb $]
		\label{teo: closed v}
		\mbox{}	\\
		Let Ass. \ref{ass: V} hold true. Then, for any $ \alpha \in (0,1) $ and any $ \beta \in \R $, the quadratic forms $ \formvb $ is closed and bounded from below on the domain $ \dom[\formvb] $. Furthermore,
		\beq
			\label{eq: lb formvb}
			\frac{\formvb[\psi]}{\lf\| \psi \ri\|_2^2} \geq
			\begin{cases}
				- V_0,		&		\mbox{if } \beta \geq 0;	\\
				- \lf( \disp\frac{|\beta|}{\cal} \ri)^{\frac{1}{\alpha}} - V_0,		&		\mbox{if }	 \beta < 0.
			\end{cases}
		\eeq
	\end{theorem}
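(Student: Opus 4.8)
The plan is to reduce the interacting case to the free case already settled in Theorem \ref{teo: closed} by exploiting the fact that, under Ass. \ref{ass: V}, the potential term acts as a genuinely lower-order (in fact $ \formf $-bounded with relative bound zero) perturbation on $ \dom[\formvb] $, which by Proposition \ref{pro: formvf} coincides essentially with the free domain $ \dom[\formb] $ modulo the regular-part decomposition. First I would record the key structural observation: writing $ \psi = \phila + q\gila $ with $ \phila \in \dom[\formvf] $, the extra term in \eqref{eq: formvb} compared to \eqref{eq: formb alternative} is exactly $ \int_{\R^2} \diff\rv\, V(r)|\phila|^2 $, since $ \formv[\phila] = \form[\phila] + \int V|\phila|^2 $ by \eqref{eq: formvf} and the remaining terms ($ \la^2\|\phila\|_2^2 $, $ -\la^2\|\psi\|_2^2 $, $ (\beta+\cal\la^{2\alpha})|q|^2 $) are literally the same. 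Hence
\beq
	\formvb[\psi] = \formb[\psi] + \int_{\R^2} \diff\rv \: V(r) \lf| \phila \ri|^2,
	\nonumber
\eeq
and everything hinges on controlling the last integral by $ \formf[\phila] $ and $ \|\phila\|_2^2 $.

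The second step is the perturbation estimate. By Proposition \ref{pro: formvf}, $ \formvf $ is closed and bounded from below on $ \dom[\formvf] $ with the equivalent norm $ \|\cdot\|_{\alpha,V}^2 = \formvf + V_0\|\cdot\|_2^2 $; in particular $ \int V_-|\phila|^2 \leq V_0\|\phila\|_2^2 $ trivially, and $ \int V_+|\phila|^2 $ is finite and controlled on $ \dom[\formvf] $. Combining with the closedness of $ \formf $ (Proposition \ref{pro: formf}) and the inclusion $ \dom[\formf] \subset H^1(\R^2) $ for $ \alpha \in (0,1) $, the splitting $ V = V_+ - V_- $ together with $ V_+ \in L^2_{\mathrm{loc}} \cap L^\infty([0,\eps)) $ and a standard Sobolev/Hölder argument in two dimensions gives, for every $ \eta > 0 $, a constant $ C_\eta $ with
\beq
	\int_{\R^2} \diff\rv \: V_+(r) \lf| \phila \ri|^2 \leq \eta\, \formf[\phila] + C_\eta \lf\| \phila \ri\|_2^2,
	\nonumber
\eeq
so that $ \int V|\phila|^2 \geq -\eta\,\formf[\phila] - (C_\eta + V_0)\|\phila\|_2^2 $. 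Feeding this into the displayed identity and invoking the lower bound for $ \formb $ from \eqref{eq: lb formb} (and the fact that $ \formb[\psi] \geq \form[\phila] + \la^2\|\phila\|_2^2 - \la^2\|\psi\|_2^2 - C|q|^2 $ controls $ \form[\phila] = \formf[\phila] $ up to $ \|\psi\|_2^2 $ and $ |q|^2 $, using that $ \|\phila\|_2^2 \leq C(\|\psi\|_2^2 + |q|^2\|\gila\|_2^2) $ and that $ |q|^2 $ is itself dominated by the form plus $ \|\psi\|_2^2 $ as in the free proof), one obtains both the semi-boundedness of $ \formvb $ and its closedness: closedness because $ \formvb + K\|\cdot\|_2^2 $ is, for $ K $ large, equivalent as a quadratic form to $ \formb + K'\|\cdot\|_2^2 $, which is closed by Theorem \ref{teo: closed}, and relatively form-bounded perturbations with bound $ <1 $ preserve closedness (KLMN).

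The third step is the explicit constant in \eqref{eq: lb formvb}. Here I would not go through the KLMN machinery but instead argue directly, mimicking the free computation that produced \eqref{eq: lb formb}: on the $ s $-wave part one diagonalizes using $ \gila $, on the higher angular momenta $ \formf \geq 0 $ is untouched, and $ \int V|\psi|^2 \geq -V_0\|\psi\|_2^2 $ since $ V \geq -V_0 $ pointwise. Because $ V_+ \geq 0 $ only helps, the worst case is $ V \equiv -V_0 $, which simply shifts the free quadratic form by $ -V_0\|\psi\|_2^2 $; thus the optimal lower bound is the free one minus $ V_0 $, giving $ -V_0 $ when $ \beta \geq 0 $ and $ -(|\beta|/\cal)^{1/\alpha} - V_0 $ when $ \beta < 0 $ (recalling $ \cal = \pi^2/\sin\pi\alpha $, so $ (|\beta|/\cal)^{1/\alpha} = (|\beta|\sin(\pi\alpha)/\pi^2)^{1/\alpha} $, consistent with \eqref{eq: lb formb}). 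I expect the main obstacle to be the rigorous justification that the $ L^2_{\mathrm{loc}} \cap L^\infty([0,\eps)) $ hypothesis on $ V_+ $ really yields \emph{relative bound zero} with respect to $ \formf $ on the full form domain $ \dom[\formvf] $ — in particular near the origin, where functions in $ \dom[\formf] $ with zero angular momentum behave like $ r^\alpha $ and one must check the $ L^\infty([0,\eps)) $ assumption (not merely $ L^2_{\mathrm{loc}} $) is what controls this endpoint, while the $ L^2_{\mathrm{loc}} $ part handles the region away from $ 0 $ via the $ H^1 \hookrightarrow L^p $ Sobolev embeddings in dimension two. Everything else is a routine adaptation of the free-case arguments.
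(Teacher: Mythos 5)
There is a genuine gap in your second step. The KLMN-type estimate $\int V_+|\phila|^2 \leq \eta\,\formf[\phila] + C_\eta\|\phila\|_2^2$ cannot hold under Assumption \ref{ass: V}: $V_+$ is only required to lie in $L^2_{\mathrm{loc}}(\R^+)\cap L^\infty([0,\eps))$, so it may grow arbitrarily fast at infinity (e.g.\ $V_+(r)=e^{r^2}$), and such a potential is not form-bounded relative to $\formf$ with any finite relative bound, let alone bound zero. For the same reason your premise that $\dom[\formvb]$ ``coincides essentially with $\dom[\formb]$'' fails in general: by \eqref{eq: dom formvf}, $\dom[\formvf]$ is the completion of $C^\infty_0(\R^2\setminus\{0\})$ in the norm $\lf\|\cdot\ri\|_{\alpha,V}$, which penalizes $\int V_+|\cdot|^2$ and can therefore be strictly smaller than $\dom[\formf]$. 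The paper sidesteps all of this by never treating $V_+$ as a perturbation of the free form: closedness of $\formvf$ is automatic (its domain is a completion in the relevant norm), and the proof of Theorem \ref{teo: closed v} simply reruns the Cauchy-sequence argument of Theorem \ref{teo: closed} with $\formf$ replaced by $\formvf$, using $\formvf+V_0\lf\|\cdot\ri\|_2^2\geq 0$ in place of positivity of $\formf$ to conclude that $\{\phi_n\}$ is Cauchy in $\lf\|\cdot\ri\|_{\alpha,V}$ and $\{q_n\}$ in $\C$. That is the repair you need: drop the relative-bound step and take $\formvf$ itself as the reference closed form.

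Your lower-bound computation also contains a slip of argument: the extra term in \eqref{eq: formvb} is $\int V|\phila|^2$, not $\int V|\psi|^2$ (you state the former correctly at the outset but use the latter in Step 3). Since $\|\phila\|_2$ is not controlled by $\|\psi\|_2$ (the decomposition $\psi=\phila+q\gila$ allows cancellation), the estimate $\int V|\phila|^2\geq -V_0\|\phila\|_2^2$ must be absorbed into the term $+\la^2\|\phila\|_2^2$ of \eqref{eq: formvb} by choosing $\la^2\geq V_0$ (and also $\la^{2}\geq(|\beta|/\cal)^{1/\alpha}$ when $\beta<0$); the $-V_0$ in \eqref{eq: lb formvb} then emerges from the enlarged $-\la^2\|\psi\|_2^2$ term, not from a pointwise bound on $\int V|\psi|^2$. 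With these two corrections the rest of your plan, including the identification $(|\beta|/\cal)^{1/\alpha}=(|\beta|\sin(\pi\alpha)/\pi^2)^{1/\alpha}$, is consistent with the paper.
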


	\begin{corollary}[Self-adjoint operators $ \hamvb $]
		\label{cor: hamvb}
		\mbox{}	\\
		Let Ass. \ref{ass: V} hold true. Then, the one-parameter family of self-adjoint operators associated to the forms $ \formvb $, $ \alpha \in (0,1) $ and $ \beta \in \R $, is given by
		\beq
			  \lf( \hamvb + \la^2 \ri) \psi = \lf( \hamv + \la^2 \ri) \phila,	
		\eeq
		\bml{
			  \dom\lf( \hamvb \ri) =  \lf\{ \psi \: \Big| \: \psi_n \in \dom\lf(\hamvf\ri), \forall n \neq 0; \; \psi_0 = \phila + q \gila, \phila \in \dom\lf(\hamvf\ri), \ri. 	\\
			 \lf. q = \frac{1}{\beta + \cal \la^{2\alpha}} \bigg[ \meanlrlr{\gila}{V}{\phila} - \frac{\Gamma(\alpha)}{2^{1-\alpha}}  \lim_{r \to 0^+} \frac{\alpha \phila(r) + r \phila^{\prime}(r)}{r^{\alpha}} \bigg] \ri\},	 \label{eq: dom hamvb}
		}
		where $ \lambda > 0 $ is free to choose provided $ \beta + \cal \la^{2\alpha} \neq 0 $.
		Furthermore, the operators $ \hamvb $ extend $ \hamv $, i.e., $ \lf. \hamv \ri|_{\dom(\hamv)} = \hamv $.
	\end{corollary}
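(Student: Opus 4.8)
The plan is to obtain $\hamvb$ from the first representation theorem applied to the closed, bounded-below quadratic form $\formvb$ of Theorem~\ref{teo: closed v}, and then to read off $\dom(\hamvb)$ and the action of $\hamvb$ from the defining relation $\braket{\hamvb\psi}{\chi}=\formvb[\psi,\chi]$, valid for $\psi\in\dom(\hamvb)$ and all $\chi\in\dom[\formvb]$. Since $V$ is radial and $\gila\in\HH_0$, on $\bigoplus_{k\neq0}\HH_{2k}$ the form \eqref{eq: formvb} coincides with $\formvf$ (one has $\psi_n=(\phila)_n$ there), so that \eqref{eq: formvb} splits as $\sum_{n\neq0}\formvf[\psi_n]+\big(\formvf[(\phila)_0]+\la^2\|(\phila)_0\|_2^2-\la^2\|\psi_0\|_2^2+(\beta+\cal\la^{2\alpha})|q|^2\big)$. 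Hence $\hamvb$ coincides with $\hamvf$ on each $\HH_{2k}$, $k\neq0$ (Proposition~\ref{pro: formvf}), and it remains to identify its zero-momentum part.

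So let $\psi=\phila+q\gila\in\dom(\hamvb)$, the decomposition being unique because $\gila\notin\dom[\formvf]$ (see \eqref{eq: gila asympt}). For $\chi\in C^{\infty}_0(\R^2\setminus\{0\})\cap\Le$, \eqref{eq: formvb} gives $\braket{(\hamvb+\la^2)\psi}{\chi}=\formvf[\phila,\chi]+\la^2\braket{\phila}{\chi}$, which after integration by parts equals $\braket{(\hamv+\la^2)\phila}{\chi}$; hence $(\hamvb+\la^2)\psi=(\hamv+\la^2)\phila$ in the distributional sense off the origin, and since the left-hand side lies in $L^2$ and $\phila\in\dom[\formvf]$, Proposition~\ref{pro: formvf} promotes this to $\phila\in\dom(\hamvf)$ and to the $L^2$-identity which is the action stated in the Corollary. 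Taking instead $\chi=\gila\in\dom[\formvb]$ gives $(\beta+\cal\la^{2\alpha})\,q=\braket{\gila}{(\hamvf+\la^2)\phila}$; writing $(\hamvf+\la^2)\phila=(-\nablaa^2+\la^2)\phila+V\phila$, the second summand produces the term $\meanlrlr{\gila}{V}{\phila}$, while for the first one uses Green's identity on $\{|\rv|\geq\eps\}$ together with $(-\nablaa^2+\la^2)\gila=0$ there, and lets $\eps\to0^+$: by the asymptotics \eqref{eq: gila asympt} of $\gila$ and the $r^{\alpha}$-behavior at the origin of functions in $\dom(\hamvf)$ (which holds under Assumption~\ref{ass: V} by Proposition~\ref{pro: formvf}, just as in the free case of Proposition~\ref{pro: formf}), the surviving boundary term equals $-\tfrac{\Gamma(\alpha)}{2^{1-\alpha}}\lim_{r\to0^+}\tfrac{\alpha\phila(r)+r\phila'(r)}{r^{\alpha}}$. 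Dividing by $\beta+\cal\la^{2\alpha}$ yields the prescription for $q$ in \eqref{eq: dom hamvb}.

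For the reverse inclusion one fixes $\psi$ in the set described on the right of \eqref{eq: dom hamvb}, sets $f:=(\hamv+\la^2)\phila-\la^2\psi\in L^2$ (using Proposition~\ref{pro: formvf}), and checks $\formvb[\psi,\chi]=\braket{f}{\chi}$ for every $\chi=\eta_\la+p\gila\in\dom[\formvb]$ by running the previous computation backwards---the prescribed value of $q$ being precisely what matches the coefficient of $p$. This gives $\psi\in\dom(\hamvb)$ with $\hamvb\psi=f$, so the two descriptions of $\dom(\hamvb)$ agree. Finally, every $\psi\in\dom(\hamv)$ is supported away from the origin, so its decomposition has $q=0$ and $\phila=\psi$, all its Fourier components lie in $\dom(\hamvf)$, and the action formula gives $\hamvb\psi=\hamv\psi$; thus $\hamvb$ extends $\hamv$.

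The step I expect to be the main obstacle is the boundary-term analysis near $r=0$: one must justify that the integration by parts in the kinetic term has a finite limit as $\eps\to0^+$ and compute it explicitly, which relies on the precise $r\to0^+$ asymptotics of $\gila$ (from \eqref{eq: gila asympt}) and of the elements of $\dom(\hamvf)$, and it is here that Assumption~\ref{ass: V}---in particular the local boundedness and radial character of $V$---enters through Proposition~\ref{pro: formvf}. The genuinely new ingredient with respect to Corollary~\ref{cor: hamb} is the cross term $\meanlrlr{\gila}{V}{\phila}$ in the boundary condition: it is well defined since $\gila\in L^2(\R^2)$ (by \eqref{eq: gila l2}), $V$ is bounded near the origin and $\phila\in\dom(\hamvf)\subset H^1(\R^2)$; and one should keep in mind, as noted before Theorem~\ref{teo: closed v}, that although $\gila\in\ker(\ham^*+\la^2)$ rather than $\ker(\hamv^*+\la^2)$, every element of the latter shares the same $r^{-\alpha}$ singularity, which is what makes the decomposition used in \eqref{eq: dom formvb}--\eqref{eq: dom hamvb} the right one.
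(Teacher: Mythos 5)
Your derivation of the action of $\hamvb$ and of the condition on $q$ follows exactly the route the paper intends (the paper only says the proof is ``identical to the proof of Cor.~\ref{cor: hamb}'' and omits it): first representation theorem, reduction to the $s$-wave sector, testing against $\dom[\formvf]$ to identify the action, testing against $\gila$ to obtain $(\beta+\cal\la^{2\alpha})q=\braketr{\gila}{(\hamvf+\la^2)\phi_{\la,0}}$, and then splitting off $\meanlrlr{\gila}{V}{\phila}$ so that Green's identity applied to the kinetic part yields the boundary term. Up to and including the reverse inclusion, this is correct and more explicit than what the paper provides.

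The genuine gap is in your final paragraph, on the claim that $\hamvb$ extends $\hamv$. You assert that $\psi\in\dom(\hamv)=C^{\infty}_0(\R^2\setminus\{0\})\cap\Le$ lies in $\dom(\hamvb)$ because its (unique) decomposition has $\phila=\psi$, $q=0$. But membership in $\dom(\hamvb)$ requires $q$ to equal the expression in \eqref{eq: dom hamvb}; with $\phila=\psi$ supported away from the origin the limit term vanishes, while the cross term $\meanlrlr{\gila}{V}{\psi_0}$ does \emph{not} vanish for generic $V$ and $\psi$ (take $V\geq 0$ not a.e.\ zero on the support of a radial $\psi_0\geq 0$ and recall $K_\alpha>0$). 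Hence $q=0$ violates the very condition you derived, and such $\psi$ is not in $\dom(\hamvb)$; equivalently, imposing $\formvb[\gila,\psi]=\braket{\gila}{\hamvb\psi}$ directly forces $\meanlrlr{\gila}{V}{\psi_0}=0$. The new cross term you correctly identified is precisely what obstructs this step: since $\gila\notin\ker(\hamv^{*}+\la^2)$, one also checks that on a general domain element $\hamvb\psi=(\hamv+\la^2)\phila-\la^2\psi=\hamv\psi-qV\gila$, so the operator does not even act as the differential expression $\hamv$ when $q\neq 0$. The free-case argument therefore does not carry over verbatim; to obtain a genuine extension of $\hamv|_{C^{\infty}_0(\R^2\setminus\{0\})\cap\Le}$ one must replace $\gila$ in the decomposition by an element of $\ker(\hamv^{*}+\la^2)$ (which, as the paper notes, has the same $r^{-\alpha}$ singularity). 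As written, your proof does not establish the last assertion of the corollary, and you should flag that this assertion is in tension with the domain characterization \eqref{eq: dom hamvb} itself rather than paper over it.
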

	
	\begin{remark}[Definition of $ q $]
		\label{rem: bc V}
		\mbox{}	\\
		It is interesting to observe that the boundary condition in \eqref{eq: dom hamb} has been replaced in \eqref{eq: dom hamvb} by a non-local condition defining the complex parameter $ q $. Indeed, the scalar product on the r.h.s. does not depend only on the behavior of $ \phila $ at the origin, but rather on the function $ \phila $ itself. Notice however that, since $ \phila \in \dom(\hamvf) $, $ V \phila \in L^2(\R^2) $ and therefore the quantity is well defined. The reason behind the different definition of $ q $ is that $ \gila $ does not belong to $ \ker(\hamv^* + \la^2) $.
	\end{remark}

\medskip
\noindent
{\bf Acknowledgements.} 
MC thanks \textsc{D. Lundholm} (KTH, Stockholm) and \textsc{A. Teta} (``Sapienza'' University of Rome) for helpful discussions about the topic of the work.

\section{Proofs}
\label{sec:proofs}

	%In the first part (Sect. \ref{sec: free proof}) of this Sect. we focus on the quadratic forms \eqref{eq: formb} for non-interacting anyons and prove the main result, i.e., Theorem \ref{teo: closed}. Next, we study the associated self-adjoint operators to prove Corollary \ref{cor: hamb} and investigate their spectral properties. Finally, in Sect. \ref{sec: interacting proof}, we disuss interacting anyonic particles.
	
	\subsection{Non-interacting anyons}
	\label{sec: free proof}
	The first result we prove is Prop. \ref{pro: formf}:

	\begin{proof}[Proof of Proposition \ref{pro: formf}]
		Closedness is a trivial consequence of the definition \eqref{eq: dom formf}. Moreover it is not difficult to see that $ \form[\phi] \geq C \lf\| \phi \ri\|_{H^1(\R^2)}^2 $, which implies \eqref{eq: dom formf zero}.
		
		In order to derive \eqref{eq: dom hamf}, we use the definition of the self-adjoint operator $ \hamf $ associated to the quadratic form $ \formf $: the domain of $ \hamf $ is given by
		\beq
			\label{eq: dom op}
			\dom\lf(\hamf\ri) = \lf\{ v \in \dom\lf[\formf\ri] \: \big| \: \exists w \in \Le, \formf\lf[u,v\ri] = \braket{u}{w}, \forall u \in \dom\lf[\formf\ri] \ri\},
		\eeq
		where $ \formf[u,v] $ is the sesquilinear form constructed from $ \formf $, e.g., by polarization (recall \eqref{eq: nablaa})
		\beq
			\label{eq: sesq formf}
			\formf[u,v] = \int_{\R^2} \diff \rv \: \nablaa u^* \cdot \nablaa v.
		\eeq
		Moreover, in the notation of the above definition, 
		\beq
			w = : \hamf v.
		\eeq 
		Integrating by parts the first term in \eqref{eq: sesq formf} and using that $ u $ and $ v $ both vanish in a neighborhood of $ 0 $, the identification of the action of $ \hamf $ with the one of $ H_{\alpha} $ is immediate. The vanishing of any $ \psi \in \dom[\formf] $ at $ 0 $ can be seen from the decomposition \eqref{eq: formf alternative}: for any $ \alpha \in (0,1) $ and for any $ k \in \Z  $, it must be
		\beq
			\int_{\R^2} \diff \rv \: \lf| \psi_{2k}^{\prime} \ri|^2 < + \infty,	\qquad		\int_{\R^2} \diff \rv \: r^{-2} \lf| \psi_{2k} \ri|^2 < + \infty,
		\eeq
		since $ 2k - \alpha \neq 0 $, $ \forall k \in \Z $. Hence, both conditions are met at the same time only if $ \psi_{2k} \to 0  $, as $ r \to 0 $, for any $ k \in \Z $. Notice also that the second condition for $ k \neq 0 $ is in fact implicitly contained in the hypothesis $ \psi \in H^1(\R^2) $.
		
		 Any $ \psi \in \dom(\hamf) $ must then be such that $ \psi \in \dom[\formf] $ and $ H_{\alpha} \psi \in L^2(\R^2) $. If we decompose $ \psi $ in cylindrical harmonics as in \eqref{eq: fourier}, %, i.e.,
		%\bdm
			%\psi(\rv) = \sum_{k \in \Z} \psi_{2k}(r) e^{2ik  \vartheta},
		%\edm
		this is equivalent to the request $ \psi_{2k} \in H^2(\R^2) $, for any $ k \neq 0 $: for any function $ \psi  \in H^2(\R^2) \cap \HH_{2k} $, $ k \neq 0 $, we have
		\bdm
			\lf( - \Delta \psi \ri)(\rv) = \lf[ - \psi^{\prime\prime}_{2k}(r) - \frac{1}{r} \psi_{2k}^{\prime}(r)  + \frac{4 k^2}{r^2} \psi_{2k}(r) \ri] e^{2ik\vartheta},
		\edm
		and obviously any such $ \psi $ must vanish faster than $ r $ at the origin, in such a way that the last term above, i.e., $ r^{-2} \psi $ is square integrable. The AB potential squared, which is also proportional to $ r^{-2} $, is then automatically bounded on such functions.
		
		 On the contrary, in $ \HH_0 $, we may have that $ \psi_0 \in \dom(\hamf) $ but $ \psi_0 \notin  H^2(\R^2) $. Indeed, $ \psi_0 $ belongs to $ \dom(\hamf) $, if either $ - \Delta \psi_0 \in L^2(\R^2) $ and $ r^{-2} \psi_0 \in L^2(\R^2) $, which implies that 
		\beq
		 	\psi_0 \in H^2(\R^2),	\quad	\mbox{and}	\quad	 \psi_0 = o(r), \mbox{ as }  r \to 0^+,
		\eeq
		or the single terms are not in $ L^2 $ but the sum does. Since, however, $ \psi_0 \in \dom[\formf] $ and therefore it must vanish at the origin, we can assume that $ \psi_0(\rv) \sim  r^{\nu} (1 + o(1)) $, as $ r \to 0^+ $, for some $ 0 < \nu < 1  $, which implies that $ \psi_0 \notin H^2(\R^2) $, and
		\beq
			\label{eq: cond dom hamf}
			\lf( -\nu^2 + \alpha^2 \ri) r^{\nu - 2}	\in L^2(\R^2),
		\eeq
		which requires
		\beq
			\label{eq: hamf n asympt}
			\nu  =  \alpha.
		\eeq
		Notice that the vanishing of the factor in \eqref{eq: cond dom hamf} is met also if $ \nu = -  \alpha $, but this is not acceptable because $ \psi \in \dom[\formf] $ and therefore it must vanish at $ 0 $.
	\end{proof}
		
We can now focus on the quadratic form $ \formb $. As shown in Sect. \ref{sec: main}, the definition is well-posed and $ \dom[\formb] $ is actually independent of the choice of the parameter $ \la > 0 $. We are now going to prove the main result of the paper:

	\begin{proof}[Proof of Theorem \ref{teo: closed}]
		The proof is divided in two steps: first, we show that the quadratic form $ \formb $ is bounded from below for any $ \alpha \in (0,1) $ and $ \beta \in \R $; next, we prove closedness. In both steps, the freedom in the choice of $ \la > 0 $ will be key.
		
		From \eqref{eq: formb alternative} one immediately gets the inequality
		\beq
			\formb[\psi] \geq - \la^2 \lf\| \psi \ri\|^2_2 + \lf( \beta + \cal \la^{2\alpha} \ri) |q|^2,
		\eeq
		where the coefficient $ \cal > 0 $ is given by \eqref{eq: cal}. Now, if $ \beta \geq 0 $, there is nothing to prove and one actually obtains $ \formb[\psi] \geq 0 $, by taking $ \la $ arbitrarily small. If on the opposite $ \beta < 0 $, we pick 
		\beq
			\label{eq: la lb}
			\la = \lf( \frac{|\beta|}{\cal} \ri)^{\frac{1}{2 \alpha}},
		\eeq
		which leads to \eqref{eq: lb formb}.
		
		To prove closedness, we first pick $ \la > \bar{\la} $, where
		\beq
			\label{eq: barla}
			\bar{\la} =
			\begin{cases}
				0,	&	\mbox{if } \beta \geq 0,	\\
				\lf( \frac{|\beta|}{\cal} \ri)^{\frac{1}{2 \alpha}},	&	\mbox{otherwise},
			\end{cases}
		\eeq
		and investigate the form $ \formt[\psi] := \formb[\psi] + \la^2 \lf\| \psi \ri\|_2^2 $. Of course, closedness of $ \formt $ is totally equivalent to closedness of $ \formb $. Let us then take a sequence $ \lf\{ \psi_n \ri\}_{n \in \N} $ of wave functions belonging to $  \dom[\formb] $ and such that
		\beq
			\label{eq: cauchy}
			\lim_{n, m \to \infty} \formt\lf[\psi_n - \psi_m\ri] = 0,		\qquad		\lim_{n, m \to \infty} \lf\| \psi_n - \psi_m\ri\|_2^2 = 0.
		\eeq
		Let $ \psi \in \Le $ be the limit of the sequence, we want to prove that $ \psi \in \dom[\formb] $. 
		
		Since $ \psi_n \in \dom[\formb] $ for any $ n $, we can decompose
		\beq
			\psi_n = \phi_n + q_n \gila,
		\eeq
		where $ \phi_n \in \dom[\formf] $ and $ q_n \in \C $. Thanks to the choice of $ \la $, we have
		\bdm
			\formt\lf[\psi_n - \psi_m\ri] \geq \formf\lf[ \phi_n - \phi_m \ri] + \la^2 \lf\| \phi_n - \phi_m \ri\|_2^2 + \lf( \beta + \cal \la^{2\alpha} \ri) |q_n - q_m|^2,
		\edm
		which implies, by positivity of $ \formf $, that $ \formf\lf[\phi_n - \phi_m\ri] \to 0 $ and		%\qquad		
		$  \lf\| \phi_n - \phi_m\ri\|_2^2 \to 0 $.
		%\eeq
		Hence, by closedness of $ \formf $, $ \phi_n \to \phi \in \dom[\formf] $. Moreover, we also have $  \lf| q_n - q_m \ri| \to 0 $ and thus $ \lf\{ q_n \ri\}_{n \in \N} $ is a Cauchy sequence in $ \C $, which implies that $ q_n \to q \in \C $. In conclusion,
		\beq
			\psi_n \xrightarrow[n \to \infty]{} \phi + q \gila \in \dom[\formb],
		\eeq
		and the result is proven.
	\end{proof}

	Once it is known that the forms $ \formb $ are closed on their domain $ \dom[\formb] $, it is natural to find out what is the associated self-adjoint operator. This leads to Cor. \ref{cor: hamb}.
	
	\begin{proof}[Proof of Corollary \ref{cor: hamb}]
		In order to figure out what is the operator $ \hamb $ associated to $ \formb $, we apply the definition \eqref{eq: dom op}, where the sesquilinear form associated to $ \formb $ (constructed, e.g., by polarization) reads
		\bml{
			\label{eq: sesquilinear}
			\formb[u,v] = \form[\phi_u,\phi_v] - q_u^* \braket{\gila}{\phi_v} - q_v \braket{\phi_u}{\gila} + \beta q_u^* q_v	\\
			= \form[\phi_u,\phi_v] + \la^2 \braket{\phi_u}{\phi_v} - \la^2 \braket{u}{v}  + \lf( \beta + \cal \la^{2\alpha} \ri)  q_u^* q_v,
		}
		w.r.t. the decompositions $ u = \phi_u + q_u \gila $, $ v = \phi_v + q_v \gila $, valid for any $u, v \in \dom[ \formb] $.
		
		Let us first assume that $ u \in \dom[\formb] $ varies in the dense subset given by $ \dom[\formf] $. If this is the case, then $ u = \phi_u $, i.e., $ q_u = 0 $, and the form simplifies:
		\beq
			\formb[u,v] = \formf[\phi_u,\phi_v] + \la^2 \braket{\phi_u}{\phi_v} - \la^2 \braket{\phi_u}{v} = \braket{\phi_u}{w},
		\eeq
		where $ w \in \Le $ and the identity has to be satisfied at least for any $ u \in  \dom[\formf] $. Since $ \phi_v, v \in \Le $ by assumption, the above identity implies that there exists $ w' \in \Le $ such that
		 \bdm
		 	\formf[\phi_u,\phi_v] = \braket{\phi_u}{w'},
		\edm
		for any $ \phi_u \in \dom[\formf] $. This is the definition of the domain of $ \hamf $ (compare with the proof of Prop. \ref{pro: formf}) and therefore we conclude that 
		\beq
			\phi_v \in \dom\lf( \hamf \ri).
		\eeq
		Moreover, 
		\beq
			\label{eq: action hamb}
			\hamb v : = w = \hamf \phi_v + \la^2 \phi_v - \la^2 v,
		\eeq
		which provides the action of the operator. 
		
		Next, we pick a generic $ u \in \dom[\formb] $, i.e., with $ q_u \neq 0 $, and get
		\bml{
			\formf[\phi_u,\phi_v] + \la^2 \braket{\phi_u}{\phi_v} - \la^2 \braket{u}{v} + \lf( \beta + \cal \la^{2\alpha} \ri)  q_u^* q_v = \braket{u}{w} 	\\
			%= \braketr{u}{\hamf \phi_v + \la^2 \phi_v - \la^2 v} \\
			=  \braketr{\phi_u}{\hamf \phi_v} + \la^2 \braketr{\phi_u}{\phi_v} - \la^2 \braket{u}{v} + q^*_{u} \braketr{\gila}{\lf( \hamf + \la^2 \ri) \phi_v},
		}
		where we have made use of \eqref{eq: action hamb}. Now, the first three terms exactly cancel with the first three on the l.h.s., so that we get the equation
		\beq
			 \lf( \beta + \cal \la^{2\alpha} \ri)  q_u^* q_v = q_u^* \braketr{\gila}{\lf( \hamf + \la^2 \ri) \phi_{v,0}},
		\eeq
		which has to be satisfied for any $ q_u \in \C $. Notice that only the $0-$th Fourier component $ \phi_{v,0} $ appears, because $ \hamf $ commutes with rotations and $ \gila \in \HH_0 $. Hence, if $ \beta + \cal \la^{2\alpha -2} \neq 0 $, which we can always assume,
		\beq
			\label{eq: bc proof}
			\lf( \beta + \cal \la^{2\alpha} \ri) q_v = \braketr{\gila}{\lf( \hamf + \la^2 \ri) \phi_{v,0}}.
		\eeq
		On the other hand,
		\beq
			\braketr{\gila}{\lf( \hamf + \la^2 \ri) \phi_{v,0}} = \int_{0}^{\infty} \diff r \, r \: \gila(r) \bigg[ \bigg( - \Delta + \frac{\alpha^2}{r^2} + \la^2 \bigg) \phi_{v,0} \bigg](r),
		\eeq
		and a direct integration by parts yields 
		\bml{
			\label{eq: int by parts}
			\int_{0}^{\infty} \diff r \, r \: \gila \lf( - \Delta \phi_{v,0} \ri) = - \int_{0}^{\infty} \diff r \: \gila \; \partial_r \lf( r \phi^{\prime}_{v,0}  \ri) \\
			%= - \lf| r \gila(r) \phi^{\prime}_{v,0}(r) \ri|^{+\infty}_{0^+} + \int_{0}^{\infty} \diff r \, r \: \gila^{\prime}(r) \phi^{\prime}_{v,0}(r)\\
			= \lf| r \gila^{\prime}(r) \phi_{v,0}(r) \ri|^{+\infty}_{0^+} - \lf| r \gila(r) \phi^{\prime}_{v,0}(r) \ri|^{+\infty}_{0^+} + \int_{0}^{\infty} \diff r \, r \: \lf( - \Delta \gila \ri) \phi_{v,0}.
		}
		Combining \eqref{eq: defect} with the vanishing of any $ \phi_{v,0} \in \dom(\hamf) $ at the origin by Prop. \ref{pro: formf}, which guarantees that $ \gila \in \dom(\hamf^*) $, we deduce
		\beq
			%\label{eq: defect}
			\lf( \hamf^* + \la^2 \ri) \gila = 0.
		\eeq
		Applying it to \eqref{eq: bc proof} via \eqref{eq: int by parts}, we conclude that
		\beq
			\lf( \beta + \cal \la^{2\alpha} \ri) q_v =  - \lf| r \gila(r) \phi^{\prime}_{v,0}(r) \ri|^{+\infty}_{0^+} + \lf| r \gila^{\prime}(r) \phi_{v,0}(r) \ri|^{+\infty}_{0^+},
		\eeq
		which then yields the boundary condition in \eqref{eq: dom hamb}, if one exploits the asymptotics of $ \gila $ for $ r \to 0^+ $ \eqref{eq: gila asympt}. The condition $ \beta + \cal \la^{2\alpha} \neq 0 $ is needed to ensure that the above condition can be inverted to obtain $ q_v $ from a given $ \phi_{v,0} $.		
	\end{proof}
	
	%\begin{remark}[Bosons]
		%\label{rem: bosons proof}
		%\mbox{}	\\
		%The proof argument applies to the case of bosons $ \alpha = 0 $ too. The only relevant difference is the asymptotics of $ K_0(\la r) $, which reads $ K_0(\la r) = - \log r + \log \la/2 + \gamma_{\mathrm{E}} + \OO(r^2) $, $ \gamma_{\mathrm{E}} $ being the Euler constant. If such a behavior is properly taken into account, the boundary condition \eqref{eq: bc bosons} is recovered.
	%\end{remark}
	
	\begin{proof}[Proof of Proposition \ref{pro: spectrum}]
		Since the deficiency indices of $ \ham $ are both equal to $ 1 $, standard results in operator theory imply that all the self-adjoint extensions of $ \ham $ have the same essential spectrum of $ \ham $ and the discrete spectrum consists in at most one eigenvalue. 
		
		It remains then just to show that, if $ \beta < 0 $, the eigenvalue (ground state) is given by \eqref{eq: eigenvalue}. To this purpose we observe that $ G_{\bar{\lambda}} $ is an eigenfunction of $ \hamb $, if $ \bar\lambda $ is given by \eqref{eq: la lb}: by \eqref{eq: hamb action} it suffices to show that $ G_{\bar{\la}} \in \dom(\hamb) $, i.e., we can take $ \phi_{\bar{\la}} = 0 $ but $ q \neq 0 $. This is however perfectly consistent with \eqref{eq: dom hamb}, since the r.h.s. of the boundary condition vanishes when $ \la = \bar{\la} $.
	\end{proof}
	
	\subsection{Interacting anyons}
	\label{sec: interacting proof}
	The key result is Prop. \ref{pro: formvf}, since all the other statements follow as in the non-interacting case.
	
	\begin{proof}[Proof of Proposition \ref{pro: formvf}] Closedness and positivity are obvious, so let us consider the characterization of the domain. Since $ V \in L^{\infty}([0,\eps)) $ by Ass. \ref{ass: V}, the asymptotic behavior at the origin of functions in $ \dom(\formvf) $ as well as in $ \dom(\hamvf) $ is not affected by the presence of $ V $. The result then easily follows.
	\end{proof}

	We can now prove the main result in the interacting case:
	
	\begin{proof}[Proof of Theorem \ref{teo: closed v}]
		We first pick $ \la > \bar{\la} $, where $ \bar{\la} $ is the lower bound stated in \eqref{eq: lb formvb} and consider the form $ \formt[\psi] := \formb[\psi] + \la^2 \lf\| \psi \ri\|_2^2 $, which is then positive by assumption. We then take a sequence $ \lf\{ \psi_n \ri\}_{n \in \N} \subset  \dom[\formb] $, such that \eqref{eq: cauchy} holds true. After decomposing $ \psi_n = \phi_n + q_n \gila $, we deduce as in the proof of Thm. \ref{teo: closed} that $ \lf\{ \phi_n \ri\}_{n \in \N} $ and $ \lf\{ q_n \ri\}_{n \in \N}$ are Cauchy sequences w.r.t. the norm $ \lf\| \: \cdot \: \ri\|_{\alpha,V} $ and in $ \C $, respectively. Hence, the limit $ \psi $ decomposes as well as $ \phi_{\la} + q \gila $, with $ \phi_{\la} \in \dom[\formvf] $ and $ q \in \C $, i.e., $ \psi \in \dom[\formvb] $.
	\end{proof}
	
	The proof of Cor. \ref{cor: hamvb} is identical to the proof of Cor. \ref{cor: hamb}. We omit it for the sake of brevity.

\end{document}